\newtheorem{lemma}{Lemma}
\newtheorem{theorem}{Theorem}
\newtheorem{claim}{Claim}
\newtheorem{corollary}{Corollary}
\newtheorem{definition}{Definition}
\newtheorem{proposition}{Proposition} 
\newcommand{\dist}{\mathrm{dist}}
\newcommand{\rank}{\mathrm{rank}}
\newcommand{\scsize}{\mathrm{SC}}
\newcommand{\cond}{|}
\providecommand{\keywords}[1]
{
  \textbf{\textit{Keywords---}} #1
}
\title{Secret Sharing on Superconcentrator}
\date{}
\author{}
\author{
Yuan Li\footnote{Fudan University, China. Email: yuan\_li@fudan.edu.cn}}
\date{March 2022}
\begin{document}

\maketitle

\abstract{
We study the arithmetic circuit complexity of threshold secret sharing schemes by characterizing the graph-theoretic properties of arithmetic circuits that compute the shares. Using information inequalities, we prove that any unrestricted arithmetic circuit (with arbitrary gates and unbounded fan-in) computing the shares must satisfy superconcentrator-like connectivity properties. Specifically, when the inputs consist of the secret and $t-1$ random elements, and the outputs are the $n$ shares of a $(t, n)$-threshold secret sharing scheme, the circuit graph must be a $(t, n)$-concentrator; moreover, after removing the secret input, the remaining graph is a $(t-1, n)$-concentrator. Conversely, we show that any graph satisfying these properties can be transformed into a linear arithmetic circuit computing the shares of a threshold secret sharing scheme, assuming a sufficiently large field. As a consequence, we derive upper and lower bounds on the arithmetic circuit complexity of computing the shares in threshold secret sharing schemes.
}

\keywords{secret sharing, superconcentrator, arithmetic circuit complexity, information inequality}

\section{Introduction}

Understanding the arithmetic complexity of \emph{secret sharing} (hereafter referred to as ``SS'') is an important problem in theoretical computer science.

For a general access structure, the complexity of a secret-sharing scheme is typically measured by the ratio of the total share size to the secret size. Csirmaz \cite{csirmaz1997size} established the best known lower bound of $\Omega(n^2 / \log n)$ using information-theoretic inequalities. On the upper bound side, Liu and Vaikuntanathan \cite{liu2018breaking} constructed a scheme with share size $2^{0.994 n}$ for arbitrary access structures, while the current best bound is $2^{0.585 n}$, due to Applebaum et al.~\cite{applebaum2019secret, applebaum2020better}.

Benaloh and Leichter \cite{benaloh1988generalized} presented a general secret sharing construction that transforms any monotone access structure into a monotone Boolean function and then builds a perfect secret sharing scheme realizing that access structure.

Exploiting the equivalence between linear secret-sharing schemes and monotone span programs, as established by Beimel \cite{beimel1996secure}, Babai, G{\'a}l, and Wigderson \cite{babai1999superpolynomial} proved the first super-polynomial lower bound. Robere, Pitassi, Rossman, and Cook \cite{robere2016exponential} obtained an exponential lower bound on the size of monotone span programs for an explicit monotone function, which in turn implies a corresponding lower bound on the total share size of any linear secret-sharing scheme realizing that access structure.

In contrast to general access structure secret-sharing schemes, threshold schemes are relatively well understood. Shamir \cite{shamir1979share} introduced a threshold scheme based on polynomial evaluation and interpolation, which can be implemented in time $O(n \cdot \mathrm{polylog}(n))$ when the secret has length $O(\log n)$. Asmuth and Bloom \cite{asmuth1983modular} later proposed a threshold scheme based on the Chinese Remainder Theorem.

Bogdanov, Guo and Komargodski proved that for any $t < n$, a $(t, n)$-threshold SS scheme for one-bit secrets requires share size $\log(t + 1)$ \cite{bogdanov2016threshold}. As a consequence, the total share sizes must be $\Omega(n \log n)$ when $t = \Omega(n)$ for one-bit $(t, n)$-threshold SS.

One variant is the \emph{near-threshold} secret-sharing scheme, in which a $(\sigma, \rho)$-threshold scheme guarantees that any set of at most $\sigma n$ parties learns nothing about the secret, while any set of at least $\rho n$ parties can fully reconstruct it. Druk and Ishai \cite{druk2014linear} showed that for such schemes the shares can be computed by linear-size, logarithmic-depth circuits, building on the hash-function-based construction of Ishai et al.~\cite{ishai2008cryptography}. Cramer et al.~\cite{cramer2015linear} gave a construction supporting both linear-time sharing and reconstruction.

\subsection{Our results}

\textbf{Circuit model.} The computation model is that of \emph{unrestricted} arithmetic circuits over a finite field $\mathbb{F}$, as illustrated in Figure~\ref{fig:model}. We assume the secret is represented by an element of $\mathbb{F}$, and each share is also an element of $\mathbb{F}$. To realize the distribution of $n$ shares in a $(t,n)$-threshold SS scheme, the circuit has $t$ inputs and $n$ outputs, where one input corresponds to the secret and the remaining $t-1$ inputs are independent random elements of $\mathbb{F}$. The circuit is unrestricted in the sense that each gate can compute \emph{any} function and has unbounded fan-in. Accordingly, we measure circuit size by the number of wires rather than the number of gates.

Our first result gives a graph-theoretic condition that must be satisfied by any unrestricted arithmetic circuit computing the $n$ shares of a $(t,n)$-threshold SS scheme.

\begin{theorem}
\label{thm:main_condition}
An \emph{$(t, n)$-concentrator}, where $t \le n$, is a directed acyclic graph with $t$ inputs and $n$ outputs in which every set of $t$ outputs is connected to distinct $t$ inputs by vertex-disjoint paths.

Let $C$ be an unrestricted arithmetic circuit (arbitrary gates, unbounded fan-in) computing the $n$ shares of a $(t,n)$-threshold SS scheme. Assume $C$ has $t$ inputs, consisting of the secret $s$ and $t-1$ random field elements. Then:
\begin{itemize}
    \item $C$ is a $(t,n)$-concentrator.
    \item Removing $s$ yields a $(t-1,n)$-concentrator.
\end{itemize}
\end{theorem}

Theorem~\ref{thm:main_condition} is proved using Shannon-type information inequalities, which is later used to derive circuit lower bounds (see Theorem~\ref{thm:ss_ckt_lb} below). To the best of our knowledge, the use of Shannon-type information inequalities to prove circuit lower bounds is new.  
Our strategy is to use the well-known entropy characterization of threshold
secret sharing (namely, \eqref{equ:h_cor} and \eqref{equ:h_pri} below) to derive new
information inequalities—specifically, Theorem~\ref{thm:m} and
Theorem~\ref{thm:master_m1}—via Shannon-type information inequalities, and to
relate these inequalities to the connectivity properties of the circuit graph.

In contrast, Newman, Ragde, and Wigderson used \emph{graph entropy} to prove superlinear lower bounds on the formula size of certain Boolean functions. Beyond this, we are not aware of other techniques that use Shannon-type or non-Shannon-type information inequalities to establish circuit lower bounds.

Our second result establishes a reverse direction that complements Theorem~\ref{thm:main_condition}. We show that any graph satisfying the conditions of Theorem~\ref{thm:main_condition} can be transformed into an arithmetic circuit computing the shares of a threshold SS scheme. Together, Theorem~\ref{thm:main_condition} and Theorem~\ref{thm:main_construction} give a graph-theoretic characterization of unrestricted arithmetic circuits computing the shares of a threshold SS scheme.

\begin{theorem}
\label{thm:main_construction}
Let $G$ be a directed acyclic graph with $t$ inputs and $n$ outputs satisfying the conditions of Theorem~\ref{thm:main_condition}. We convert $G$ into an arithmetic circuit by replacing each non-input vertex with a weighted addition gate, where the coefficients are chosen independently and uniformly at random from $\mathbb{F}_q$. If $q$ is sufficiently large, then with high probability the resulting circuit computes the $n$ shares of a $(t,n)$-threshold SS scheme.
\end{theorem}

The proof technique originates from network coding~\cite{li2003linear} and has been applied in several other places. For example, Cheung, Kwok, and Lau used this idea to design fast algorithms for computing the rank of a matrix~\cite{cheng2017near}, and Drucker and Li used related ideas to construct circuits encoding error-correcting codes~\cite{Drucker2022}. In \cite{Drucker2022}, Drucker and Li introduced \emph{superconcentrator-induced codes} whose generator matrices are totally invertible, meaning that every square submatrix is invertible. Such codes can be used to realize threshold secret sharing schemes; in fact, as shown in Theorem~\ref{thm:main_construction}, a weaker condition already suffices.

The connection between threshold secret sharing and MDS codes was known \cite{karnin1983secret, blakley1995ideal}. We can therefore reformulate our main results in terms of the arithmetic circuit complexity of MDS codes.

Shah, Rashmi, and Ramchandran \cite{shah2013secure} studied the communication complexity of threshold secret sharing by establishing a necessary condition and a distinct sufficient condition, both expressed in graph-theoretic terms. Their necessary condition coincides with the first condition in Theorem~\ref{thm:main_condition}; their sufficient condition differs from ours.

As a consequence of the graph-theoretic characterization, we derive asymptotically tight lower bounds on the size of bounded-depth circuits that compute the shares of a threshold SS scheme.

\begin{theorem}
\label{thm:ss_ckt_lb}
Let \(\mathbb{F}_q\) be a field and let \(c \in (0,1)\) be a constant.
Let \(C : \mathbb{F}_q^{t} \to \mathbb{F}_q^{n}\), with \(t = cn\), be an
unrestricted arithmetic circuit of depth \(d\) (allowing arbitrary gates
and unbounded fan-in) that computes the shares of a \((t,n)\)-threshold
secret sharing scheme. The inputs consist of the secret together with
\(t-1\) random elements, and the outputs are the \(n\) shares.
Then the number of wires in \(C\) is at least $\Omega_{d, c}\!\left(\lambda_d(n)\cdot n\right)$.
\end{theorem}

Theorem~\ref{thm:ss_ckt_lb} is proved by combining Theorem~\ref{thm:main_condition}, the connectivity requirement satisfied by circuits encoding good error-correcting codes as shown by G{\'a}l et al.~\cite{gal2013tight} (straightforwardly extended from $\mathbb{F}_2$ to a large finite field), and the size bounds for densely regular graphs due to Pudl{\'a}k~\cite{Pud94}.

Our next result is a non-explicit construction of unbalanced superconcentrators. The construction follows classical techniques for balanced superconcentrators, such as those in \cite{Pinsker73, Val77, DDPW83}.

\begin{theorem}
\label{thm:main_sc}
For any $m \ge n$, there exists an $(m, n)$-superconcentrator with $O(m)$ edges and with depth $\alpha(m, n) + O(1)$, where $\alpha(m, n)$ is a version of the two-parameter inverse Ackermann function. 
\end{theorem}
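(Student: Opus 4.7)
The plan is to extend the classical hierarchy of shallow balanced superconcentrators (in the spirit of Dolev--Dwork--Pippenger--Wigderson) to the unbalanced regime $m \ge n$, exploiting the fact that a single sparse bipartite layer can concentrate $m$ inputs down to $\Theta(n)$ vertices whenever $m/n$ is large. The main building block is a bipartite concentrator: for $m \ge cn$ there exists a bipartite graph on $m$ inputs and $\Theta(n)$ outputs with $O(m)$ edges such that every subset of at most $n/2$ inputs admits a perfect matching to distinct outputs. This is proved by the standard probabilistic argument: give each input a constant number of random neighbors and verify Hall's condition by a union bound over the $\binom{m}{k}$ potentially violating subsets for $k \le n/2$. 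Since the construction is already stated to be non-explicit, this probabilistic concentrator suffices.

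Given this building block, I would construct the $(m,n)$-superconcentrator recursively. Place a concentrator from the $m$ inputs down to a ``top'' middle layer of $\Theta(n)$ vertices and, symmetrically, a reversed concentrator from a ``bottom'' middle layer of $\Theta(n)$ vertices up to the $n$ outputs. Between the two middle layers, recursively install a smaller (almost-balanced) superconcentrator on $\Theta(n)$ vertices per side. Following Valiant's original trick, also add a direct matching-style structure so that each target subset of inputs of size $k$ can either be routed directly through the matching or funneled through the middle layer via the two concentrators. The edge count at each level is $O(m)$, and since the sizes decrease geometrically through the recursion, the total edge count telescopes to $O(m)$.

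The depth analysis rests on a recurrence of the form $D(m,n) = 2 + D(m', n/2)$, where $m'$ is chosen so that the ratio $m'/(n/2)$ advances to the next level of the Ackermann hierarchy at each step. The two-parameter inverse Ackermann function $\alpha(m,n)$ appears precisely because the surplus of $m$ over $n$ can be ``cashed in'' to skip a level of the hierarchy---each constant increment of the depth parameter absorbs one Ackermann level, just as in the balanced case. The main obstacle I foresee is calibrating this trade-off: the reduction ratio at each level must be large enough that the $m$-budget advances one full Ackermann level, yet small enough that the edge-count geometric series still sums to $O(m)$. A secondary difficulty is a careful handling of the two boundary regimes: when $m/n$ is already so large that a depth-$1$ bipartite concentrator essentially suffices (so that the answer bottoms out at constant depth, consistent with the examples of depth $2$ and $3$ quoted in the abstract), and when $m/n$ is close to $1$ so that the construction collapses gracefully into the classical balanced depth-$\alpha(n)$ superconcentrator.
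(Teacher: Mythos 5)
There is a genuine gap in the depth accounting. Your plan funnels the $m$ inputs down to a middle layer of $\Theta(n)$ vertices with one concentrator, then recursively places an (almost-balanced) superconcentrator on $\Theta(n)$ vertices in the middle. Once that concentration has happened, the surplus $m/n$ is gone: the remaining recursive sub-problem is a $\Theta(n)\times\Theta(n)$ superconcentrator whose depth is governed by $\alpha(n)$, not $\alpha(m,n)$. So, for instance, when $m = n^{2.5}$ the theorem promises depth $O(1)$ (here $\alpha(m,n)=1$), but your construction yields depth $2+\Theta(\alpha(n))$, which is unbounded in $n$. The recurrence $D(m,n) = 2 + D(m',n/2)$ with $m'$ chosen so that the ratio ``advances a level'' is never reconciled with the $\Theta(n)$ middle layer: if the middle has $\Theta(n)$ vertices then $m'/(n/2) = \Theta(1)$ and nothing advances. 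You flag the calibration of this trade-off as the main obstacle, but it is precisely the heart of the result and your sketch does not supply a mechanism for the surplus $m/n$ to be carried into the recursion rather than discarded at the first layer.

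The paper's construction is asymmetric and non-recursive, and avoids this trap by \emph{not} shrinking the intermediate layer all the way to $\Theta(n)$. It first shows (Lemma~\ref{lem:depth_d_sc_size}, by truncating the balanced DDPW construction and noting $\lambda_d(m)=O(\lambda_d(n))$ once $m\le n^{2.5}$) that a depth-$d$ $(m,n)$-superconcentrator has size $O(m\,\lambda_d(n))$, with $\lambda_d$ evaluated at the \emph{smaller} parameter $n$. It then appends a single reversed concentrator taking the $m$ outputs down to $m/r$ intermediate vertices with $r=(m/n)^{1/(1+\epsilon)}$, and sits a depth-$d$ $(m/r,n)$-superconcentrator on top. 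The intermediate layer has $m/r\gg n$ vertices; this slack is what makes the superconcentrator part cost $O\bigl((m/r)\lambda_d(n)\bigr)=O(m)$ whenever $m/n\ge\lambda_d(n)^{1+\epsilon}$, a condition met for $d=\alpha(m,n)+2$ by Proposition~\ref{prop:alpha_well_defined}. Two smaller issues with your sketch: the Valiant-style matching shortcut is not used in the paper and does not extend cleanly to $m>n$ (a matching from $m$ inputs can cover only $n$ of them); and a constant-degree random concentrator from $m$ inputs to $\Theta(n)$ outputs does not survive the union bound over $\binom{m}{k}$ input subsets when $m$ is super-polynomial in $n$ --- the degree would have to grow with $\log m/\log n$ --- which the paper sidesteps by treating that regime with the depth-$2$ construction of Lemma~\ref{lem:sc_depth2_linear_size} and otherwise invoking Lemma~\ref{lem:linear_concentrator}.
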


Combining Theorems~\ref{thm:main_construction} and \ref{thm:main_sc}, we conclude that a $(t, n)$-threshold secret sharing scheme can be implemented by a linear-size circuit of depth $O(\alpha(t, n))$. For instance, depth~2 suffices when $n > t^{2.5}$, and depth~3 suffices when $n > t \log^{1.5} t$.

\section{Preliminaries}

\subsection{Entropy and information inequalities}

Let $X$ be a random variable taking values in a finite set, and let
$p(x) = \Pr[X = x]$. The entropy of $X$ is
\[
H(X) = - \sum_x p(x)\log p(x).
\]
For random variables $X$ and $Y$, the conditional entropy of $Y$ given $X$ is
\[
H(Y \mid X) = - \sum_{x,y} p(x,y)\log p(y \mid x),
\]
where $p(y \mid x) = \Pr[Y = y \mid X = x]$. It is well known that
\[
H(Y \mid X) = H(X,Y) - H(X)
\quad\text{and}\quad
H(Y \mid X) = \sum_x p(x) H(Y \mid X=x).
\]

The mutual information between $X$ and $Y$ is
\[
I(X;Y) = \sum_{x,y} p(x,y)\log \frac{p(x,y)}{p(x)p(y)},
\]
and satisfies $I(X;Y)=I(Y;X)$ and
\begin{equation}
I(X;Y) = H(X) - H(X \mid Y) \ge 0.
\end{equation}
The conditional mutual information of $X$ and $Y$ given $Z$ is
\[
I(X;Y \mid Z)
= \sum_{x,y,z} p(x,y,z)
\log \frac{p(x,y \mid z)}{p(x \mid z)p(y \mid z)},
\]
and satisfies
\begin{equation}
I(X;Y \mid Z) = H(X \mid Z) - H(X \mid Y,Z) \ge 0.
\end{equation}

A linear information inequality is called \emph{Shannon-type} if it can be obtained as a nonnegative linear combination of the basic Shannon inequalities, namely the nonnegativity of conditional entropy
$H(X \mid Y) \ge 0$
and conditional mutual information
$I(X;Y \mid Z) \ge 0$.

We refer to~\cite{yeung2002first} for background on entropy functions and information inequalities.

\subsection{Secret sharing scheme}

A \emph{secret sharing (SS) scheme} allows a dealer to distribute a secret among a group of participants such that
\begin{itemize}
	\item (Correctness) any authorized subset of participants can fully recover the secret, and
	\item (Privacy) any unauthorized subset of participants learns nothing about the secret.
\end{itemize}

One widely studied type of SS scheme is the \emph{$(t,n)$-threshold SS scheme}, in which any subset of $t$ participants (out of $n$ participants) can recover the secret, while any subset of at most $t-1$ participants learns nothing about it.

Fix a finite field $\mathbb{F}$. We represent the secret $s$ as an element of $\mathbb{F}$. (If the secret size is larger than the field size, the secret can be divided into smaller pieces, and the SS scheme can be applied to each piece separately.) Let $r_1, r_2, \dots$ be random elements of $\mathbb{F}$ used by the SS scheme, and let $R = \{r_1, r_2, \ldots\}$. A scheme is called a \emph{linear SS scheme} if each share is a linear combination of the secret $s$ and $r_1, r_2, \ldots$ over the field $\mathbb{F}$.

It is well known that the requirements of a secret sharing scheme can be characterized using entropy functions~\cite{beimel2011secret}. Let the secret be represented by a random variable $S$, and let the $n$ shares be represented by random variables $Y_1, Y_2, \ldots, Y_n$. The scheme is a $(t,n)$-threshold SS scheme if and only if the following conditions hold:
\begin{itemize}
	\item (Correctness) For every $T = \{i_1, \ldots, i_t\} \subseteq [n]$ of size $t$,
	\begin{equation}
	\label{equ:h_cor}
	H(S \cond Y_T) = 0,
	\end{equation}
	where $Y_T$ denotes the vector $(Y_{i_1}, \ldots, Y_{i_t})$.
	\item (Privacy) For every $T \subseteq [n]$ of size $t-1$,
	\begin{equation}
	\label{equ:h_pri}
	H(S \cond Y_T) = H(S).
	\end{equation}
\end{itemize}

\subsection{Arithmetic circuit model}

In the arithmetic circuit model, we assume
\begin{itemize}
\item The secret is an element of a finite field $\mathbb{F}$.

\item During the distribution phase, at most $\ell-1$ random field elements are used, denoted by $r_1, \ldots, r_{\ell-1}$. Let $R = {r_1, \ldots, r_{\ell-1}}$. (As shown later, any $(t,n)$-threshold SS scheme requires at least $t-1$ random elements.)

\item The $n$ shares are computed by an arithmetic circuit over $\mathbb{F}$ with $n$ outputs, denoted by $y_1, y_2, \ldots, y_n$.

\item Gates have unbounded fan-in. The size of the circuit is measured by the number of wires.

\item For the lower bounds proved in this work, we assume the arithmetic circuit is \emph{unrestricted}: a gate with fan-in $d$ may compute an arbitrary function from $\mathbb{F}^d$ to $\mathbb{F}$, where $d$ is unbounded.

\item For the upper bounds in this work, we assume that each gate is a \emph{weighted addition gate}. That is, a gate $g(x_1, \ldots, x_m)$ with inputs $x_1, \ldots, x_m$ computes
\[
g(x_1, \ldots, x_m) = \sum_{i=1}^m c_i x_i,
\]
where $c_1, \ldots, c_m \in \mathbb{F}$ are fixed coefficients. Note that a weighted addition gate with $m$ inputs can be realized by first applying $m$ multiplication gates (one per input) followed by a bounded fan-in addition gate, resulting in total size $O(m)$ and depth $2$.

\item In a linear SS scheme, the circuit computes a linear transformation $X \mapsto MX$, where $X \in \mathbb{F}^{\ell}$ and $M$ is an $n \times \ell$ matrix. (However, internal gates need not be linear.)

\item An arithmetic circuit is called \emph{linear} if every gate computes a linear function over $\mathbb{F}$. In particular, a linear arithmetic circuit computes a linear transformation.
\end{itemize}

\begin{figure}[h]
\centering
\includegraphics[scale=0.4]{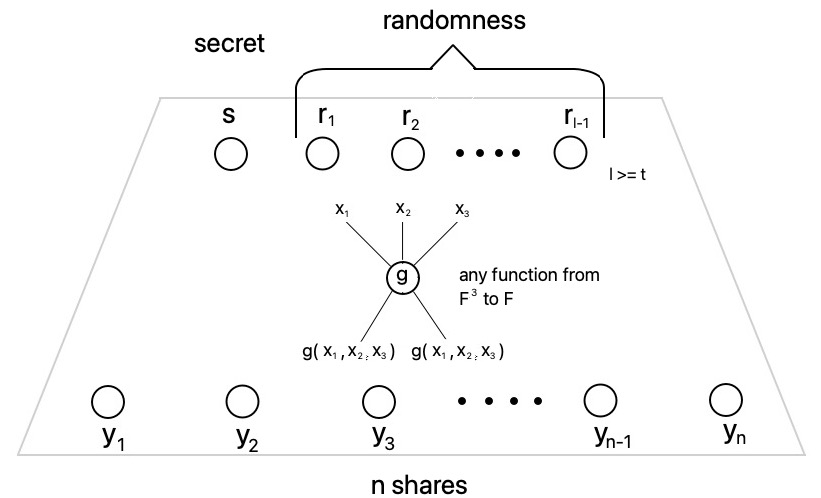}
\caption{unrestricted arithmetic circuit computing $n$ shares}
\label{fig:model}
\end{figure}

\subsection{Inverse Ackermann-type function}

Following Raz and Shpilka \cite{raz2001lower}, we define slowly-growing functions $\lambda_d(n)$. These are inverse Ackermann-type functions that are tailored for superconcentrators.

\begin{definition} For a function $f$, define $f^{(i)}$ to be the composition of $f$ with itself $i$ times, i.e., $f^{(i)} = \underbrace{f\circ f\circ \ldots \circ f}_{i \text{ times}}$. Thus, $f^{(1)} = f$.

For a function $f: \mathbb{N} \to \mathbb{N}$ such that $f(n) < n$ for all $n > 1$, define
\[
f^*(n) = \min\{ i : f^{(i)}(n) \le 1 \}.
\]
\end{definition}

\begin{proposition} Let $f : \mathbb{N} \to \mathbb{N}$ be any function such that $f(n) < n$ for all $n > 1$. Then we have
\[
f^*(n) \le f(n)-1.
\]
\end{proposition}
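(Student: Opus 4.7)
The plan is to proceed by strong induction on $n$, using the one-step recurrence
\[
f^*(n) \;=\; 1 + f^*\!\bigl(f(n)\bigr),
\]
which follows immediately from the definition whenever $n > 1$: since $f(n) < n$ one has $f^{(i)}(n) = f^{(i-1)}(f(n))$ for every $i \ge 1$, so the index achieving the minimum for $n$ is exactly one more than the index achieving it for $f(n)$. The base cases $n \le 1$ are trivial because $f^*(n) = 0$ there.

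For the inductive step I would fix $n > 1$ and set $m := f(n)$, so $m < n$. Applying the inductive hypothesis at $m$ gives $f^*(m) < f(m)$, and plugging into the recurrence,
\[
f^*(n) \;=\; 1 + f^*(m) \;\le\; f(m).
\]
Provided $m > 1$, the assumption $f(k) < k$ applied at $k = m$ yields $f(m) < m = f(n)$, and chaining the two bounds closes the induction:
\[
f^*(n) \;\le\; f(m) \;<\; f(n).
\]

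I expect the only real obstacle to be the boundary behavior when $m = f(n) \le 1$: here the recurrence still gives $f^*(n) = 1$, and one cannot invoke the hypothesis on $m$ to get the final strict inequality. This case has to be handled separately by inspecting the very small values of $n$ and reconciling them with the paper's convention for $f$ on $\{0,1\}$. Once that small bookkeeping is dispatched, the argument is essentially the two-step unrolling above; no structural property of $f$ beyond the strict-decrease hypothesis $f(k) < k$ is required.
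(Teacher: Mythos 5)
Your approach is genuinely different from the paper's. The paper argues directly and non-inductively: it considers the chain of iterates $f^{(1)}(n) > f^{(2)}(n) > \cdots > f^{(i)}(n)$ (with $i = f^*(n)$), observes that this is a strictly decreasing sequence of positive integers starting at $f(n)$, and concludes by counting that $i$ cannot exceed $f(n)$ (the paper writes $i \le f(n) - 1$). Your recurrence $f^*(n) = 1 + f^*(f(n))$ and strong induction on $n$ reach the same conclusion by unrolling one step at a time; it is a cleaner bookkeeping device but amounts to the same underlying fact that the iterates drop by at least one each application.

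The boundary issue you flag is genuine, and notably it afflicts the paper's proof as well. When $f(n) \le 1$ (e.g.\ $f = \lambda_1$ and $n \in \{2,3\}$, where $\lambda_1(n) = \lfloor\sqrt{n}\rfloor = 1$), one gets $f^*(n) = 1 = f(n)$, so the claimed strict inequality $f^*(n) < f(n)$ fails; similarly $\lambda_1^*(4) = 2 = \lambda_1(4)$. The paper's chain argument, read carefully, only yields $i \le f(n)$, not $i \le f(n) - 1$ as written (and the displayed chain ``$n > f(n) > f^{(1)}(n) > \cdots$'' contains a typo, since $f^{(1)} = f$ by the paper's own convention). The correct statement is $f^*(n) \le f(n)$, and that weaker inequality is all the paper uses downstream (e.g.\ in the proof of Proposition~\ref{prop:u34d_ub}, part 3). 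So your induction is sound, closes cleanly to give $f^*(n) \le f(n)$, and you are right that the strict version cannot be rescued at the small-value boundary without extra hypotheses on $f$.
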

\begin{proof}
Consider
\[
f^{(1)}(n), \ldots, f^{(i)}(n) = 1,
\]
where $i = f^*(n)$.
Since $n > f(n) > f^{(1)}(n) > \ldots > f^{(i)}(n) = 1$, we have $i \le f(n) - 1$.
\end{proof}

\begin{definition} \cite{raz2001lower} Let
\begin{eqnarray*}
	\lambda_1(n) &=& \lfloor \sqrt{n} \rfloor, \\
	\lambda_2(n) &=& \lceil \log(n) \rceil, \\
	\lambda_d(n) & = & \lambda_{d-2}^*(n), \text{ for } d \ge 3
\end{eqnarray*}
\end{definition}

As $d$ increases, the functions $\lambda_d(n)$ grow extremely slowly. One can verify the following:
\[
\lambda_2(n) = \Theta(\log n), \qquad
\lambda_3(n) = \Theta(\log\log n), \qquad
\lambda_4(n) = \Theta(\log^* n), \qquad
\lambda_5(n) = \Theta(\log^* n).
\]

To analyze the size bounds of unbalanced superconcentrators, we require a two-parameter version of the inverse Ackermann function. The constants in the following definition are not optimized.

\begin{definition}
\label{def:inv_ack}
[Inverse Ackermann function] For any $m \ge n$, define
\begin{equation}
    \alpha(m, n) =
    \begin{cases}
      \min\{ d : \frac{m}{n} \ge \lambda_d(n) \}, & \text{if}\ m \ge 128 n, \\
      \min\{ d : \lambda_d(n) \le 4 \}, & \text{otherwise.}
    \end{cases}
\end{equation}
We denote $\alpha(n, n)$ simply by $\alpha(n)$, which recovers the standard one-parameter inverse Ackermann function.
\end{definition}

There exist multiple variants of the Ackermann function in the literature. To the best of our knowledge, all reasonable definitions of the inverse Ackermann function differ only by a multiplicative constant factor.

To show that $\alpha(m, n)$ is well-defined and to facilitate its use in the construction of unbalanced superconcentrators, we require the following properties. Proofs are deferred to Appendix~\ref{app:inverse_ack_properties}.

\begin{proposition}
\label{prop:u34d_ub}
	1. For any $n \ge 1$,
\[
\lambda_3(n) \le \log\log n + 2.
\]

2. For any $n \ge 1$,
\[
\lambda_4(n) \le 2 \log^*n.
\]

3. For any $d \ge 1$, for all $n \ge 4$,
\[
\lambda_d(n) \le n-2.
\]
\end{proposition}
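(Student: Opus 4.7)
The plan is to treat the three inequalities separately, each via a short induction.

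For part 3, I would proceed by induction on $d$. The base cases $d = 1, 2$ reduce to the elementary estimates $\lfloor\sqrt{n}\rfloor \le n-2$ and $\lceil\log n\rceil \le n-2$ for $n \ge 4$, both easy to verify directly. For the inductive step $d \ge 3$, the recurrence $\lambda_d = \lambda_{d-2}^*$ combined with the proposition $f^*(n) < f(n)$ already proved in the excerpt yields $\lambda_d(n) < \lambda_{d-2}(n) \le n-2$, so in fact $\lambda_d(n) \le n-3$. (One should also observe in passing that $\lambda_{d-2}(n) < n$ for $n > 1$, which is needed for $f^*(n) < f(n)$ to apply; this follows from the same induction.)

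For part 1, I would prove the strengthened claim that if $n \le 2^{2^k}$, then $k$ iterations of $\lfloor\sqrt{\cdot}\rfloor$ starting from $n$ yield a value at most $2$. This goes by induction on $k$: the case $k = 0$ is immediate from $n \le 2$, while for the inductive step $n \le 2^{2^k}$ gives $\lfloor\sqrt{n}\rfloor \le 2^{2^{k-1}}$, and the hypothesis applied to $\lfloor\sqrt{n}\rfloor$ supplies the remaining $k-1$ iterations. Since one more application sends $2$ to $1$, taking $k = \lceil\log\log n\rceil$ gives $\lambda_3(n) = \lambda_1^*(n) \le \lceil\log\log n\rceil + 1 \le \log\log n + 2$.

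For part 2, I would carry out the analogous tower-of-twos argument for $\lambda_4(n) = \lambda_2^*(n)$. Define $T_0 = 1$ and $T_{k+1} = 2^{T_k}$, so that $\log^*(n) = \min\{k : T_k \ge n\}$. I claim by induction on $k$ that if $n \le T_k$, then $k$ iterations of $\lceil\log\cdot\rceil$ bring $n$ down to at most $1$: the base $k = 0$ is trivial, and inductively $n \le 2^{T_{k-1}}$ forces $\lceil\log n\rceil \le T_{k-1}$ (using that $T_{k-1}$ is a positive integer), so the inductive hypothesis handles the remaining $k-1$ iterations. This yields the sharper bound $\lambda_4(n) \le \log^*(n)$, and in particular the claimed $\lambda_4(n) \le 2\log^* n$.

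The main obstacle is really just careful bookkeeping at the boundary: verifying that $\lfloor\sqrt{n}\rfloor$ and $\lceil\log n\rceil$ always land inside the claimed tower level as integers, and handling small $n$ where $\log\log n$ is degenerate (one either restricts to $n \ge 2$ or reads the inequality with the natural convention at $n = 1$). Once those checks are in place, each of the three claims follows from the short induction above, and no deeper estimate is required.
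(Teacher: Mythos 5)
Your argument is correct, and it is worth noting that parts 1 and 2 take a cleaner route than the paper. For part 3 your induction on $d$ — base cases $d \in \{1,2\}$ by direct verification, inductive step via $\lambda_d = \lambda_{d-2}^*$ together with the proposition $f^*(n) < f(n)$ — is essentially identical to the paper's, and your parenthetical remark that one must also verify $\lambda_{d-2}(n) < n$ for all $n > 1$ (the hypothesis needed to invoke $f^* < f$) is a real precondition that the paper glosses over, so that care is welcome. For parts 1 and 2, the paper instead works with explicit closed-form estimates: for $\lambda_3$ it uses $\lambda_1^{(d)}(n) \le n^{2^{-d}}$ and solves $n^{2^{-d}} \le 3$ directly, and for $\lambda_4$ it bounds $\lceil \log n\rceil$ by $g(n) = \log n + 1$, observes $g \circ g \le \log$ for $n \ge 8$, and then converts back to $\log^*$ with some additive slack plus a finite check for $n \in \{3,\dots,7\}$. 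Your tower-of-twos induction (if $n \le 2^{2^k}$ then $k$ iterations of $\lfloor\sqrt{\cdot}\rfloor$ reach $\le 2$; if $n \le T_k$ then $k$ iterations of $\lceil\log\cdot\rceil$ reach $\le 1$) avoids all of that bookkeeping, is self-contained, and in fact yields the sharper bounds $\lambda_3(n) \le \lceil\log\log n\rceil + 1$ and $\lambda_4(n) \le \log^* n$, of which the stated inequalities are immediate corollaries. The only thing you should make explicit, which you already flag, is the treatment of $n = 1$ (and $n = 2$ for part 1), where $\log\log n$ is degenerate; this is a defect of the proposition's ``$n \ge 1$'' phrasing rather than of your argument, and the paper is silent on it as well.
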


\begin{proposition}
\label{prop:alpha_well_defined}
1. For any $d \ge 1$,
\[
\lambda_d(d) \le 4.
\]

2. For any $d \ge 1$, if $\lambda_d(n) \le C$, where $C \ge 128$, then
\[
\lambda_{d+2}(n)^2 \le C.
\]
\end{proposition}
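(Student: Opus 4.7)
The plan is to prove the two parts separately by induction on $d$, with all the work ultimately reducing to the bound $\lambda_d(y)\le y-2$ for $y\ge 4$ from Proposition~\ref{prop:u34d_ub}(3), together with two auxiliary facts that I would establish by easy inductions on $d$: namely $\lambda_d(2)\le 1$ and $\lambda_d(3)\le 2$ for every $d\ge 1$, plus the monotonicity of $\lambda_d(\cdot)$ in its argument (which propagates from the monotonicity of $\lambda_1,\lambda_2$ through the $*$-operation, since $f$ monotone implies $f^*$ monotone).

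For part~(1), I would first verify $d=1,2,3,4$ by direct computation. For $d\ge 5$ the inductive step proceeds by writing $\lambda_d(d)=\lambda_{d-2}^*(d)$ and bounding four successive iterates of $\lambda_{d-2}$ starting from $d$: first $\lambda_{d-2}(d)\le d-2$ by Proposition~\ref{prop:u34d_ub}(3); next $\lambda_{d-2}^{(2)}(d)\le \lambda_{d-2}(d-2)\le 4$ by monotonicity and the inductive hypothesis; then $\lambda_{d-2}^{(3)}(d)\le \lambda_{d-2}(4)\le 2$; finally $\lambda_{d-2}^{(4)}(d)\le \lambda_{d-2}(2)\le 1$. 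This yields $\lambda_d(d)\le 4$.

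For part~(2), the key lemma to establish first is
\[
\lambda_d(x)\le \tfrac{x}{2}+1 \qquad \text{for all } d\ge 1 \text{ and } x\ge 4.
\]
For $d=1,2$ this is immediate from $\sqrt{x}\le x/2$ and $\log_2 x\le x/2$ when $x\ge 4$. For $d\ge 3$, $\lambda_d(x)=\lambda_{d-2}^*(x)$ counts the iterations of $\lambda_{d-2}$ needed to fall to $\le 1$, and by Proposition~\ref{prop:u34d_ub}(3) each iteration removes at least $2$ from the value while it exceeds $3$, after which the auxiliary facts finish in at most $2$ more steps. Given the key lemma, I set $x_0=C$ and $x_{i+1}=\lambda_d(x_i)$: the substitution $y_i=x_i-2$ turns the recurrence into $y_{i+1}\le y_i/2$ (valid while $y_i\ge 2$), so $y_i\le (C-2)/2^i$, and after $\lceil\log_2(C-2)\rceil$ iterations one has $x_i\le 3$; two further cleanup steps ($3\to 2\to 1$) give $\lambda_d^*(C)\le\lceil\log_2(C-2)\rceil+2$. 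Monotonicity of $\lambda_d$ turns the hypothesis $\lambda_d(n)\le C$ into $\lambda_d^{(i+1)}(n)\le \lambda_d^{(i)}(C)$, so $\lambda_{d+2}(n)=\lambda_d^*(n)\le \lambda_d^*(C)+1\le \log_2 C+4$, and squaring gives $\lambda_{d+2}(n)^2\le (\log_2 C+4)^2\le C$ for every $C\ge 128$.

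The main obstacle is not conceptual but quantitative: the inequality $(\log_2 C+4)^2\le C$ is tight at the threshold $C=128$ (where $(7+4)^2=121\le 128$), so I must track the additive constants produced by the two cleanup phases (one inside the key lemma, one at the end of the main iteration) carefully. Above the threshold the inequality propagates automatically because $\sqrt{C}$ grows strictly faster than $\log_2 C$.
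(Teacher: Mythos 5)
Your proof is correct. Part~1 is essentially the paper's argument: the same induction stepping by $2$, anchored on Proposition~\ref{prop:u34d_ub}(3) and on monotonicity of $\lambda_d$ in its argument (which the paper uses implicitly and you make explicit). You trace four explicit iterates of $\lambda_{d-2}$ down from $d$, whereas the paper telescopes the inequality $\lambda_d^*(\cdot)\le\lambda_d^*(\lambda_d(\cdot))+1$ twice and then bounds $\lambda_d^*(4)$, but the substance is identical. Part~2, however, takes a genuinely different route. The paper invokes the sharper bounds $\lambda_3(n)\le\log\log n+2$ and $\lambda_4(n)\le 2\log^*n$ from Proposition~\ref{prop:u34d_ub}(1)--(2), which forces a case split on the parity of $d$; you instead establish a uniform, parity-free key lemma $\lambda_d(x)\le x/2+1$ for all $d\ge 1$ and $x\ge 4$, using only Proposition~\ref{prop:u34d_ub}(3) together with the easy facts $\lambda_d(3)\le 2$, $\lambda_d(2)\le 1$, and from it deduce the coarser $\lambda_d^*(C)\le\log_2 C+O(1)$. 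The paper's estimate is quantitatively much tighter (of order $\log\log C$ rather than $\log C$), so it would support a threshold well below $128$; yours is exactly tight at $C=128$ (as you note, $(7+4)^2=121\le 128$), but it is more self-contained, needs fewer auxiliary propositions, and avoids the parity distinction entirely. Both deliver the required squared inequality.
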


\subsection{Superconcentrators and concentrators}

An \emph{$(m, n)$-network} is a directed acyclic graph with $m$ inputs and $n$ outputs.

\begin{definition} [$(m, n)$-superconcentrator \cite{Val77}]
An $(m, n)$-network is an \emph{$(m, n)$-superconcentrator} if for any subsets $X \subseteq I$ and $Y \subseteq O$ of equal size, there exist $|X|$ vertex-disjoint paths connecting $X$ to $Y$.
\end{definition}

In this definition, $m$ and $n$ need not be equal.
When $m = n$, tight bounds on the size of bounded-depth superconcentrator are known, achieved in a series of papers. 

\begin{table}[h!]
\begin{center}
\begin{tabular}{ |c|c|c| } 
\hline
Depth & Size  \\
\hline
2 & $\Theta(n \log^2 n / \log\log n)$ \cite{AP94, RT00}  \\ 
3 & $\Theta(n \log\log n)$ \cite{AP94} \\ 
$d \ge 4$ & $\Theta(n \lambda_d(n))$ \cite{DDPW83, Pud94}  \\ 
$\Theta(\alpha(n))$ & $\Theta(n)$ \cite{DDPW83}  \\ 
\hline
\end{tabular}
\caption{Superconcentrator size bounds.}
\label{table:sc_bounds}
\end{center}
\end{table}

Another relevant concept is \emph{concentrators}, which are critical building blocks for constructing superconcentrators.

\begin{definition}[Concentrator]
\label{def:concentrator}
An \emph{$(m, n, c)$-concentrator} is an $(m, n)$-network with the following property:  for any $c$  vertices chosen from the smaller of the input and output sets, there exist $c$ vertex-disjoint paths connecting them to  $c$ distinct vertices in the other set.

In particular:
\begin{itemize}
    \item If $m \ge n$, then for any subset of $c$ inputs, there exist $c$ vertex-disjoint paths connecting them to $c$ distinct outputs.
    \item If $m < n$, then for any subset of $c$ outputs, there exist $c$ vertex-disjoint paths connecting them to $c$ distinct inputs.
\end{itemize}

An $(m, n, n)$-concentrator or $(m, n, m)$-concentrator is called a \emph{full-capacity concentrator}, and is simply denoted as an $(m, n)$-concentrator.
\end{definition}

Every $(m, n)$-superconcentrator is an $(m, n)$-concentrator. Concentrators serve as building blocks in the construction of superconcentrators.

Using a standard probabilistic argument, one can prove 
\begin{lemma}
\label{lem:linear_concentrator}
\cite{pinsker1973complexity, DDPW83} 
For any integers $m, n, k$ satisfying $n \ge m$, $n \ge 1.1\,k$, and $m \ge k+1$, there exists a depth-1 $(m, n, k)$-concentrator of size
\[
O\Bigl(n \cdot \frac{\log(n/k)}{\log(m/k)}\Bigr).
\]
\end{lemma}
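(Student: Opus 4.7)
The plan is to apply the probabilistic method: I will sample a random depth-$1$ bipartite graph at an appropriate edge density and argue, via Hall's marriage theorem and a union bound, that it is an $(m, n, k)$-concentrator with positive probability.

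The first step is to reformulate the concentrator condition. A depth-$1$ bipartite graph $G$ on $I \cup O$ with $|I| = m$, $|O| = n$ is an $(m, n, k)$-concentrator if and only if $|N(X)| \ge |X|$ for every $X \subseteq I$ with $|X| \le k$. The forward direction supplies the $|X|$ vertex-disjoint paths demanded by Definition~\ref{def:concentrator}, while the reverse direction follows from Hall's marriage theorem applied inside each $k$-subset. Equivalently, $G$ \emph{fails} precisely when there exist $X \subseteq I$ and $Y \subseteq O$ with $|X| = j$, $|Y| = j - 1$, and $N(X) \subseteq Y$ for some $2 \le j \le k$ (the case $j = 1$ is handled automatically once every input has degree at least one, which the random construction below satisfies).

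The second step is the random construction. Let each input $u \in I$ pick $d$ output-neighbors independently and uniformly at random from $O$, where $d = \lceil C \log(n/k)/\log(m/k) \rceil$ for a suitable constant $C$. The edge count is $md \le nd = O(n \log(n/k)/\log(m/k))$ since $n \ge m$, which matches the target. For each bad pair $(X, Y)$ with $|X| = j$ and $|Y| = j - 1$, the event $N(X) \subseteq Y$ forces all $jd$ independent edge endpoints out of $X$ to land in $Y$, which has probability at most $((j-1)/n)^{jd}$. Taking a union bound,
\[
\Pr[G \text{ fails}] \le \sum_{j=2}^{k} \binom{m}{j}\binom{n}{j-1}\left(\frac{j-1}{n}\right)^{jd}.
\]

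The final step is to verify this sum is strictly below $1$. Using the standard estimates $\binom{m}{j} \le (em/j)^j$ and $\binom{n}{j-1} \le (en/j)^{j-1}$, the $j = k$ summand simplifies to roughly $\big((e^2 m/k)(k/n)^{d-1}\big)^k$, which forces $(d-1)\log(n/k) \ge \log(m/k) + O(1)$; both logarithms are strictly positive thanks to the hypotheses $n \ge 1.1k$ and $m \ge k + 1$, so the asserted $d$ is well-defined. The main obstacle is the bookkeeping across all $j$: a single choice of $d$ must kill every $2 \le j \le k$ simultaneously. The standard Pinsker-style estimate handles this by verifying that the summands form a geometric progression in $j$, so that by taking the constant $C$ sufficiently large the total sum falls below $1$, yielding the claimed concentrator by the probabilistic method.
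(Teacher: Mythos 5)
The paper itself gives no proof of this lemma (it is quoted from Pinsker and DDPW as ``a standard probabilistic argument''), so the comparison has to be with the standard argument and with how the lemma is used later in the paper. The genuine problem with your proposal is the orientation of the bipartite graph. You read the statement literally against Definition \ref{def:concentrator} and build a graph with $m$ inputs and $n \ge m$ outputs; but in that orientation the statement is vacuous: a perfect matching of the $m$ inputs into $m$ distinct outputs is already an $(m,n,k)$-concentrator with $m \le n$ edges, far below the claimed bound, so the random construction proves nothing beyond a triviality. More importantly, it is not the statement the paper needs. In every application (Lemma \ref{lem:depth2_partial_sc}, Lemma \ref{lem:sc_depth2_linear_size}, Theorem \ref{thm:general_d_linear_sc}) the concentrator has $n$ inputs and $m \le n$ outputs, and the bound $O\bigl(n \log(n/k)/\log(m/k)\bigr)$ is (number of inputs) times $\log(\mathrm{inputs}/k)/\log(\mathrm{outputs}/k)$; the nontrivial content of Pinsker/DDPW is precisely that one can concentrate from many inputs into few outputs at this edge count, and your argument does not establish that.

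The symptom is already visible inside your write-up: you set $d = \lceil C\log(n/k)/\log(m/k)\rceil$, yet the constraint you then derive from the $j=k$ term is $(d-1)\log(n/k) \ge \log(m/k)+O(1)$, i.e.\ a constant degree suffices in your orientation; the two are never reconciled (your $d$ just happens to be overkill there). In the intended orientation each of the $n$ inputs picks $d$ neighbors among only $m$ outputs, so the per-pair failure probability is $\bigl((j-1)/m\bigr)^{jd}$, the union bound is $\sum_{j=2}^{k}\binom{n}{j}\binom{m}{j-1}\bigl((j-1)/m\bigr)^{jd}$, and the binding constraint, at the worst case $j=k$ (worst because $(\log(n/j)+O(1))/\log(m/j)$ is increasing in $j$ when $n \ge m$), is $(d-1)\log(m/k) \ge \log(n/k)+O(1)$. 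That is exactly where the degree $\Theta\bigl(\log(n/k)/\log(m/k)\bigr)$ comes from, where the hypothesis $m \ge k+1$ matters (it only guarantees $\log(m/k) \ge \log(1+1/k)$, which can be tiny, hence the large degree), and where $n \ge 1.1k$ is used to absorb the additive $O(1)$ into $O(\log(n/k))$. Your Hall's-theorem reformulation and union-bound skeleton are the right tools; redo them with the two sides swapped as above and the geometric-decay bookkeeping does go through.
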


\section{Characterization via graph-theoretic properties}

\subsection{Necessity}

In this section, we use Shannon's information measures to show that the connectivity properties must hold for all circuits computing threshold secret-sharing schemes, linear or nonlinear.

The computation model is, again, an unrestricted arithmetic circuit as illustrated by Figure \ref{fig:model}, which computes the shares of some $(t, n)$-threshold SS scheme. The inputs are $s$ and $R = \{r_1, r_2, \ldots, r_{\ell-1} \}$, where $s \in \mathbb{F}$ is the secret, and $r_1, \ldots, r_{\ell-1} \in \mathbb{F}$ are independently and uniformly distributed over $\mathbb{F}$; the outputs are $y_1, \ldots, y_n$, representing $n$ shares.

Our goal is to prove that any circuit computing the shares of some $(t, n)$-threshold SS scheme must satisfy
\begin{itemize}
	\item For any subset of outputs $T \subseteq [n]$ of size $t-1$, there are $t-1$ vertex-disjoint paths connecting $R$ and $T$.
	\item For any subset of outputs $T \subseteq [n]$ of size $t$, there are $t$ vertex-disjoint paths connecting inputs (i.e., $R \cup \{s\}$) and $T$.
\end{itemize}

Our strategy is to formulate the \emph{connectivity requirements} as \emph{information inequalities}. We rely on two key observations: each gate carries at most $\log |\mathbb{F}|$ units of information; if random variables $Y$ can be written as a function of random variables $X$, then $H(Y) \le H(X)$.
So, it suffices to prove
\begin{itemize}
	\item For any subset of outputs $T \subseteq [n]$ of size $t-1$, 
	\begin{equation}
	\label{equ:goal_tm1}
		H(Y_T \cond S) \ge (t-1) H(S).
	\end{equation}
	\item For any subset of outputs $T \subseteq [n]$ of size $t$,
    \begin{equation}
    \label{equ:goal_t}
    	H(Y_T) \ge t H(S).
    \end{equation}
\end{itemize}
Given \eqref{equ:h_cor} and \eqref{equ:h_pri}, it turns out inequalities \eqref{equ:goal_tm1} and \eqref{equ:goal_t} can be proved using Shannon-type inequalities.
 
Before proving the information inequalities \eqref{equ:goal_tm1} and \eqref{equ:goal_t}, we first prove the following lemma.
 
\begin{lemma}
\label{lem:sum_nm1_nonnegative}
Let $Y_1, Y_2, \ldots, Y_n$ be random variables. Then
\[
\sum_{j \in [n]} H(Y_{[n] \setminus \{j\}}) - (n-1) H(Y_1, \ldots, Y_n) \ge 0. 
\]
\end{lemma}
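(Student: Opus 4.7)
The plan is to recognize this as a form of Han's inequality and prove it using only the two ingredients already introduced in the background: the chain rule for entropy and the fact that conditioning does not increase entropy (equivalently, that mutual information is nonnegative, $I(X;Y \mid Z) \ge 0$).

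First, for each fixed $j \in [n]$, apply the chain rule to split off $Y_j$:
\[
H(Y_1, \ldots, Y_n) = H(Y_{[n] \setminus \{j\}}) + H(Y_j \mid Y_{[n] \setminus \{j\}}),
\]
so that $H(Y_{[n] \setminus \{j\}}) = H(Y_1,\ldots,Y_n) - H(Y_j \mid Y_{[n] \setminus \{j\}})$. Summing over $j$ gives
\[
\sum_{j \in [n]} H(Y_{[n] \setminus \{j\}}) = n\, H(Y_1,\ldots,Y_n) - \sum_{j \in [n]} H(Y_j \mid Y_{[n] \setminus \{j\}}).
\]
Thus the claim reduces to showing $\sum_{j} H(Y_j \mid Y_{[n] \setminus \{j\}}) \le H(Y_1,\ldots,Y_n)$.

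Second, expand $H(Y_1, \ldots, Y_n)$ using the chain rule with the natural ordering:
\[
H(Y_1, \ldots, Y_n) = \sum_{j=1}^{n} H(Y_j \mid Y_1, \ldots, Y_{j-1}).
\]
For each $j$ we have $\{1, \ldots, j-1\} \subseteq [n] \setminus \{j\}$, so conditioning on the larger set can only decrease entropy:
\[
H(Y_j \mid Y_{[n] \setminus \{j\}}) \le H(Y_j \mid Y_1, \ldots, Y_{j-1}).
\]
Summing over $j$ and combining with the previous displayed equation yields the desired inequality.

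There is no real obstacle here; the only thing to be careful about is bookkeeping so that the two chain-rule expansions align properly. The proof is a short two-step manipulation and uses nothing beyond the Shannon-type inequality $I(X;Y \mid Z) \ge 0$ quoted in Section 2.2.
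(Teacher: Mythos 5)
Your proof is correct and is essentially the same argument the paper gives: both rewrite each $H(Y_{[n]\setminus\{j\}})$ via the chain rule as $H(Y_{[n]}) - H(Y_j \mid Y_{[n]\setminus\{j\}})$, and both then bound $\sum_j H(Y_j \mid Y_{[n]\setminus\{j\}})$ by $H(Y_{[n]}) = \sum_j H(Y_j \mid Y_1,\ldots,Y_{j-1})$ using the fact that conditioning on a larger set cannot increase entropy. The only difference is presentational: you isolate the reduction to $\sum_j H(Y_j \mid Y_{[n]\setminus\{j\}}) \le H(Y_{[n]})$ as an explicit intermediate claim, while the paper carries the terms along and cancels at the end.
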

\begin{proof} Write
\begin{align}
 & \sum_{j \in [n]} H(Y_{[n] \setminus \{j\}}) - (n-1) H(Y_1, \ldots, Y_n) \nonumber\\
 = & \sum_{j \in [n]} \left(H(Y_{[n] \setminus \{j\}}) - H(Y_1, \ldots, Y_n) \right) + H(Y_1, \ldots, Y_n). \label{equ:sum_j}
\end{align}
By the chain rule
\begin{align*}
 & H(Y_1, Y_2, \ldots, Y_n) \\
 = & H(Y_1) + H(Y_2 \cond Y_1) + H(Y_3 \cond Y_1, Y_2) + \ldots + H(Y_n \cond Y_1, Y_2, \ldots, Y_{n-1}) \\
 \ge & H(Y_1 \cond Y_{[n] \setminus \{1\}}) + H(Y_2 \cond Y_{[n] \setminus \{2\}}) + \ldots + H(Y_n \cond Y_{[n] \setminus \{n\}}),
\end{align*}
since conditioning reduces entropy.
Plugging it into \eqref{equ:sum_j}, we have
\begin{align*}
 & \sum_{j \in [n]} H(Y_{[n] \setminus \{j\}}) - (n-1) H(Y_1, \ldots, Y_n)\\
 \ge & \sum_{j \in [n]} \left(H(Y_{[n] \setminus \{j\}}) - H(Y_1, \ldots, Y_n) \right) + \sum_{j \in [n]} H(Y_j \cond Y_{[n] \setminus \{j\}}) \\
 = & \sum_{j \in [n]} \left(H(Y_{[n] \setminus \{j\}}) - H(Y_1, \ldots, Y_n) + H(Y_j \cond Y_{[n] \setminus \{j\}})\right) \\
 = & 0.
\end{align*}
This proves the lemma.
\end{proof}
 
 \begin{theorem}
 \label{thm:m}
 Let $S, Y_1, Y_2, \ldots, Y_n$ be random variables satisfying
 \begin{itemize}
 	\item $H(S \cond Y_T) = H(S)$ for any $T \subseteq [n]$ of size $t-1$, and
 	\item $H(S \cond Y_T) = 0$ for any $T \subseteq [n]$ of size $t$.
 \end{itemize}
 Then, we have
 \[
 H(Y_T) \ge t H(S)
 \]
 for any $T \subseteq [n]$ of size $t$.	
 \end{theorem}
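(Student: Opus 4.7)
The plan is to prove the stronger claim that $H(Y_T) \geq |T| \cdot H(S)$ for every $T \subseteq [n]$ with $|T| \leq t$, by induction on $|T|$; the theorem is then the $|T|=t$ case. The base case $|T|=0$ is trivial since $H(Y_\emptyset)=0$. For the inductive step, fix $T$ with $|T|=k$ for some $1 \leq k \leq t$, pick any $j \in T$, and set $T_0 = T \setminus \{j\}$. By the chain rule and the inductive hypothesis applied to $T_0$,
\[
H(Y_T) = H(Y_{T_0}) + H(Y_j \cond Y_{T_0}) \geq (k-1)H(S) + H(Y_j \cond Y_{T_0}),
\]
so it suffices to establish the single-share lower bound $H(Y_j \cond Y_{T_0}) \geq H(S)$.

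The key move is to pad $T_0$ (which may have size strictly less than $t-1$) out to a canonical set of size exactly $t-1$ that still avoids $j$. Since $n \geq t$ is implicit in the $|T|=t$ hypothesis, the complement $[n] \setminus T$ has at least $t-k$ elements, so I can pick distinct $i_1,\ldots,i_{t-k} \in [n] \setminus T$ and set $T_1 = T_0 \cup \{i_1,\ldots,i_{t-k}\}$, which has size $t-1$ and satisfies $j \notin T_1$; consequently $|T_1 \cup \{j\}|=t$. Conditioning on more variables reduces entropy, and nonnegativity of $H(Y_j \cond Y_{T_1}, S)$ gives $H(Y_j \cond Y_{T_1}) \geq I(S; Y_j \cond Y_{T_1})$, so
\[
H(Y_j \cond Y_{T_0}) \geq H(Y_j \cond Y_{T_1}) \geq I(S; Y_j \cond Y_{T_1}) = H(S \cond Y_{T_1}) - H(S \cond Y_{T_1 \cup \{j\}}) = H(S) - 0 = H(S),
\]
where the two entropy evaluations are precisely the privacy hypothesis at size $t-1$ and the correctness hypothesis at size $t$. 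Combining with the previous display yields $H(Y_T) \geq k H(S)$, closing the induction.

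The main subtlety is the uniform treatment of intermediate sizes $k<t$: the two hypotheses are stated only at the very specific sizes $t-1$ and $t$, so the padding step is what allows a single inductive argument to keep working all the way down to the empty set; without it one would be stuck trying to relate $H(Y_j \cond Y_{T_0})$ to $H(S)$ when $|T_0|$ is too small for the privacy hypothesis to directly apply. Lemma \ref{lem:sum_nm1_nonnegative} (Han's inequality) does not appear necessary for this direct derivation, and I would not plan to invoke it here; it may be used elsewhere in the paper or provide an alternative, more symmetric proof.
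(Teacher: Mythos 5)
Your proof is correct, and it takes a genuinely different route from the paper. The paper's proof is a direct one-shot calculation: it writes $H(Y_{[t]}) - tH(S)$ as a sum of three pieces --- one that vanishes by the two hypotheses, one that is a sum of conditional entropies $\sum_j H(Y_j \cond S, Y_{[t]\setminus\{j\}})$, and one of the form $\sum_j H(Y_{[t]\setminus\{j\}}) - (t-1)H(Y_{[t]})$, whose nonnegativity is exactly Lemma~\ref{lem:sum_nm1_nonnegative} (Han's inequality). Your proof instead strengthens the claim to $H(Y_T) \ge |T|\,H(S)$ for all $|T| \le t$ and runs an induction on $|T|$, reducing the inductive step to the single-share bound $H(Y_j \cond Y_{T_0}) \ge H(S)$. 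The padding trick --- enlarging $T_0$ to a size-$(t-1)$ set $T_1$ avoiding $j$, then using monotonicity of conditional entropy and the chain $H(Y_j \cond Y_{T_1}) \ge I(S;Y_j \cond Y_{T_1}) = H(S \cond Y_{T_1}) - H(S \cond Y_{T_1 \cup \{j\}}) = H(S)$ --- is exactly what is needed to invoke the two hypotheses (which live only at sizes $t-1$ and $t$) from an arbitrary intermediate size, and it handles all the size-bookkeeping cleanly. Your approach is more elementary (no Han's inequality), exposes the local reason the bound holds (each additional share must carry at least $H(S)$ fresh entropy given any $t-2$ others and the privacy/correctness gap), and in fact yields the stronger conclusion for all $|T|\le t$. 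What the paper's approach buys is uniformity of presentation: the same ``decompose into nonnegative Shannon-type terms'' template is reused verbatim for Theorem~\ref{thm:master_m1}, and it makes visible that both statements are Shannon-type inequalities derivable from $I(\cdot;\cdot\cond\cdot)\ge 0$, which is a point the paper emphasizes. Your induction would also adapt to Theorem~\ref{thm:master_m1} with the same padding idea, so there is no real loss, just a stylistic trade-off.
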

 \begin{proof}
 Since the assumptions hold for every subset $T$ of size $t$, it suffices to prove the claim for $T = \{1,...,t\}$, that is, $H(Y_1, \ldots, Y_t) \ge t H(S)$. 

 We decompose $H(Y_1, \ldots, Y_t) - t H(S)$ into three nonnegative terms:
 \begin{align}
  & H(Y_1, \ldots, Y_t) - t H(S)  \nonumber\\
  = & \sum_{j \in [t]} \left( H(S, Y_{[t] \setminus \{j\}}) - H(Y_{[t] \setminus \{j\}}) - H(S) \right) - t\left( H(S, Y_{[t]}) - H(Y_{[t]})\right) \nonumber \\
  & + \sum_{j \in [t]} \left( H(S, Y_{[t]}) - H(S, Y_{[t] \setminus \{j\}}) \right)  + \sum_{j \in [t]} H(Y_{[t] \setminus \{j\}}) - (t-1)H(Y_{[t]}) \nonumber \\
  = & \sum_{j \in [t]} \left( H(S \cond Y_{[t] \setminus \{j\}}) - H(S) \right) - t H(S \cond Y_{[t]}) \label{term:m_t1} \\
  & + \sum_{j \in [t]} H(Y_j \cond S, Y_{[t] \setminus \{j\}}) \label{term:m_t2} \\
  & + \sum_{j \in [t]} H(Y_{[t] \setminus \{j\}}) - (t-1) H(Y_{[t]}). \label{term:m_t3}
 \end{align}
 The first term \eqref{term:m_t1} is zero by our conditions; the second term \eqref{term:m_t2} is  nonnegative; the third term \eqref{term:m_t3} is nonnegative by Lemma \ref{lem:sum_nm1_nonnegative}.
 Thus, we have $H(Y_1, \ldots, Y_t) - t H(S) \ge 0$, as desired.
 \end{proof}

\begin{theorem}
\label{thm:master_m1}
Let $S, Y_1, Y_2, \ldots, Y_n$ be random variables satisfying
 \begin{itemize}
 	\item $H(S \cond Y_T) = H(S)$ for any $T \subseteq [n]$ of size $t-1$, and
 	\item $H(S \cond Y_T) = 0$ for any $T \subseteq [n]$ of size $t$.
 \end{itemize}
Then, we have
\[
H(Y_T \cond S) \ge (t-1)H(S)
\]
for any $T \subseteq [n]$ of size $t-1$.
\end{theorem}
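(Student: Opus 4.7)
The plan is to reduce to (the proof of) Theorem \ref{thm:m} by inserting one extra output. By symmetry, I may assume $T = [t-1]$ and set $T^{+} = [t]$, so $|T^{+}| = t$. Applying the chain rule to $(Y_T, Y_t, S)$ gives
\[
H(Y_T, S) = H(Y_{T^{+}}, S) - H(Y_t \cond Y_T, S),
\]
so $H(Y_T \cond S) = H(Y_{T^{+}}, S) - H(Y_t \cond Y_T, S) - H(S)$. Correctness on $T^{+}$ simplifies $H(Y_{T^{+}}, S) = H(Y_{T^{+}}) + H(S \cond Y_{T^{+}}) = H(Y_{T^{+}})$. Subtracting $(t-1)H(S)$ from both sides,
\[
H(Y_T \cond S) - (t-1) H(S) = \bigl( H(Y_{T^{+}}) - t H(S) \bigr) - H(Y_t \cond Y_T, S).
\]
So it suffices to show $H(Y_{T^{+}}) - t H(S) \ge H(Y_t \cond Y_T, S)$.

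At this point the statement of Theorem \ref{thm:m} only provides $H(Y_{T^{+}}) \ge t H(S)$, which is a hair too weak on its own. However, its \emph{proof} establishes the sharper identity
\[
H(Y_{T^{+}}) - t H(S) = \sum_{j \in T^{+}} H(Y_j \cond S, Y_{T^{+} \setminus \{j\}}) + \Bigl[\sum_{j \in T^{+}} H(Y_{T^{+} \setminus \{j\}}) - (t-1) H(Y_{T^{+}}) \Bigr],
\]
in which both bracketed blocks are nonnegative (the second by Lemma \ref{lem:sum_nm1_nonnegative}, the first as a sum of conditional entropies). Dropping the Lemma block and keeping only the $j = t$ summand of the middle sum yields $H(Y_{T^{+}}) - t H(S) \ge H(Y_t \cond S, Y_T)$, which is exactly what is needed. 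The two $H(Y_t \cond Y_T, S)$ terms cancel and we conclude $H(Y_T \cond S) \ge (t-1) H(S)$.

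The main obstacle, such as it is, is noticing that one needs to look inside the proof of Theorem \ref{thm:m} rather than just invoke its statement. An equivalent route that avoids this is to replay the three-block decomposition (a conditions block that vanishes, a sum-of-conditional-entropies block that is manifestly nonnegative, and a Lemma \ref{lem:sum_nm1_nonnegative} block that is nonnegative) directly on the target quantity $H(Y_T \cond S) - (t-1)H(S)$, using the extra index $t \notin T$ to supply a size-$t$ subset at which correctness applies --- the reason one cannot simply symmetrize over $j \in T$ alone is that the resulting size-$(t-2)$ subsets fall below the threshold at which privacy or correctness say anything useful.
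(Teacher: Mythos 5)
Your proof is correct and, once the cancellations are carried through, it is arithmetically the same decomposition as the paper's: after substituting the sharpened identity from the proof of Theorem \ref{thm:m} into $H(Y_T \cond S) - (t-1)H(S) = \bigl(H(Y_{T^+}) - tH(S)\bigr) - H(Y_t \cond Y_T, S)$, the $H(Y_t \cond Y_T, S)$ term cancels against the $j=t$ summand, leaving precisely the three-block decomposition of terms \eqref{term:m1_t1}--\eqref{term:m1_t4}. The ``equivalent route'' you mention at the end --- replaying the three-block decomposition directly on $H(Y_T \cond S) - (t-1)H(S)$ using the extra index $t$ to reach the size-$t$ threshold --- is exactly what the paper does.
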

\begin{proof}
Since the assumptions and the claim are invariant under relabeling, we assume $T = \{1,2,\ldots,t-1\}$ without loss of generality. Write $H(Y_T \cond S) - (t-1) H(S)$ as
\begin{align}
 & H(Y_{[t-1]} \cond S) - (t-1) H(S) \nonumber \\
 = & H(S, Y_{[t-1]}) - t H(S) \nonumber \\
 = & \sum_{j \in [t-1]} \left( H(S, Y_{[t]}) - H(S, Y_{[t] \setminus \{j\}}) \right)
  + \sum_{j \in [t]} \left( H(S, Y_{[t]\setminus\{j\}}) - H(Y_{[t]\setminus\{j\}}) - H(S) \right) \nonumber \\
 & - (t-1) \left( H(S, Y_{[t]}) - H(Y_{[t]}) \right)  + \sum_{j \in [t]} H(Y_{[t] \setminus \{j\}}) - (t-1) H(Y_{[t]}) \nonumber \\
 = & \sum_{j \in [t-1]} H(Y_j \cond S, Y_{[t] \setminus \{j\}}) \label{term:m1_t1} \\
 & + \sum_{j \in [t]} \left( H(S \cond Y_{[t]\setminus\{j\}}) - H(S) \right) \label{term:m1_t2} \\
 & - (t-1) H(S \cond Y_{[t]}) \label{term:m1_t3} \\
 & + \sum_{j \in [t]} H(Y_{[t] \setminus \{j\}}) - (t-1) H(Y_{[t]}) \label{term:m1_t4},
\end{align}
where the term \eqref{term:m1_t1} is clearly nonnegative; terms \eqref{term:m1_t2} and \eqref{term:m1_t3} are zero due to our conditions; term \eqref{term:m1_t4} is nonnegative by Lemma \ref{lem:sum_nm1_nonnegative}. Thus, we have $H(Y_{[t-1]} \cond S) - (t-1) H(S) \ge 0$, as desired.
\end{proof}

\begin{theorem}[Menger's Theorem]
Let $G=(V,E)$ be an undirected graph and let $s,t \in V$ be distinct non-adjacent vertices.
The size of a minimum $s$--$t$ vertex cut equals the maximum number of pairwise
internally vertex-disjoint $s$--$t$ paths.
\end{theorem}

Now we are ready to prove our first theorem.

\begin{theorem} [Theorem \ref{thm:main_condition} restated]
\label{thm:graph_properties_main}
In the above model as illustrated by Figure \ref{fig:model}, if the circuit computes the shares of some $(t, n)$-threshold SS scheme, the following conditions are satisfied:
\begin{itemize}
	\item for any $T \subseteq [n]$ of size $t-1$, there are $t-1$ vertex-disjoint paths connecting $R$ and $T$;
	\item for any $T \subseteq [n]$ of size $t$, there are $t$ vertex-disjoint paths connecting inputs and $T$.
\end{itemize}
\end{theorem}
\begin{proof} 
Assume for contradiction that the first condition does not hold. That is, there exists a set $T \subseteq [n]$ of size $t-1$ such that there are at most $t-2$ vertex-disjoint paths from $R$ to $T$. By Menger's theorem, there exists a vertex set $U$ of size at most $t-2$ whose removal disconnects $R$ from $T$.

By the definition of the cut set $U$, we know that after setting $S$ to a constant, the outputs $Y_T$ can be written as functions in the gates in $U$. Since each gate value lies in $\mathbb{F}$ and hence carries at most $\log |\mathbb{F}|$ bits of entropy, we have
\[
H(Y_T \cond S = s) \le |U| \log |\mathbb{F}| \le (t-2) \log |\mathbb{F}|.
\]
Thus, 
\begin{align*}
	H(Y_T \cond S) = & \sum_s\Pr[S = s] H(Y_T \cond S = s) \\
	 \le & \sum_s\Pr[S = s] (t-2) \log |\mathbb{F}| \\
	 = & (t-2) \log |\mathbb{F}|.
\end{align*}
On the other hand, by Theorem~\ref{thm:master_m1}, we have
\[
H(Y_T \mid S) \ge (t-1) H(S) = (t-1) \log |\mathbb{F}|,
\]
since $S$ is uniformly distributed over $\mathbb{F}$. This is a contradiction.

The second condition can be proved similarly using Theorem \ref{thm:m}.
\end{proof}

In other words, the circuit, viewed as a graph, is an $(|R|+1, n, t)$-concentrator; moreover, after removing the input $s$ (along with its incident edges), the remaining graph is an $(|R|, n, t-1)$-concentrator.


\subsection{Sufficiency}

Given any $(t, n)$-superconcentrator $G$, or more generally any $(t, n)$-network satisfying the conditions of Theorem~\ref{thm:main_condition}, we construct a $(t, n)$-threshold secret-sharing scheme. 
The scheme is linear over a sufficiently large finite field $\mathbb{F}$, chosen such that
$|\mathbb{F}| \gg d \binom{n}{t},$
where $d$ denotes the \emph{depth} of the superconcentrator.

\begin{figure}[h]
\centering
\includegraphics[scale=0.4]{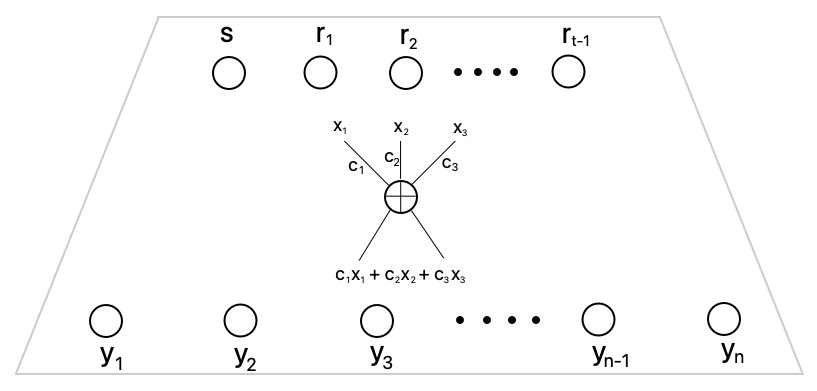}
\caption{linear arithmetic circuit realizing SS distribution}
\end{figure}

The SS scheme has the following 3 phases:

\textbf{Setup:} Convert $G$ into an arithmetic circuit $C$ over field $\mathbb{F}$ by
\begin{itemize}
\item replacing each vertex with an addition gate, and
\item for every edge $e$, choosing a coefficient $c_e \in \mathbb{F}$ \emph{uniformly at random}.
\end{itemize}
One can easily check this linear arithmetic circuit computes a linear transformation $x \mapsto Mx$, where
$M = (m_{ij})$ is a $n \times t$ matrix. Here
\[
m_{i,j} = \sum_{v_1 = x_j, v_2, \ldots, v_\ell = y_i} \prod_{k=1}^{\ell-1} c_{(v_k, v_{k+1})}
\]
where the sum ranges over all paths from input $x_j$ to output $y_i$.

\textbf{Sharing:} Assign the secret $s$ to input $x_1$, and choose $x_2, \ldots, x_t$ uniformly at random.
For every $i \in [n]$, send the $i$th output of the circuit to participant $P_i$.
In other words,
\[
\begin{pmatrix}
y_1 \\
y_2 \\
\vdots \\
y_n
\end{pmatrix}
=M
\begin{pmatrix}
s \\
r_2 \\
\vdots \\
r_t
\end{pmatrix},
\]
where $s \in \mathbb{F}$ is the secret, and $r_2, \ldots, r_t$ are uniformly random elements over $\mathbb{F}$.

\textbf{Reconstruction:} Consider any coalition of $t$ participants $T \subseteq [n]$ with shares
denoted by
\[
Y_T = M_T X = M_T \begin{pmatrix}
s \\
r_2 \\
\vdots \\
r_t
\end{pmatrix},
\]
where $X = (s, r_2, \ldots, r_t)^T$, and 
$M_T$ is the $t \times t$ submatrix of $M$ formed by rows indexed by $T$ and all columns.
Assuming $M_T$ is invertible (which will be proved), we have $X = M_T^{-1} Y_T$. The secret $s$ is then recovered as the first coordinate of $M_T^{-1} Y_T$.

\begin{lemma}
\label{lem:recover}
With probability at least $1 - \frac{d \binom{n}{t}}{|\mathbb{F}|}$, the secret can be recovered by any set of $t$ participants.
\end{lemma}
\begin{proof}
It suffices to show that, with probability at least $1 - \frac{d \binom{n}{t}}{|\mathbb{F}|}$, for all $T \subseteq [n]$ of size $t$, $\det(M_T) \neq 0$.
Because if $M_T$ is invertible, we can recover the secret by computing $X = M_T^{-1} Y_T$, where $Y_T$ denotes the shares received by the $t$ participants.

\begin{claim} For any $T \subseteq [n]$ of size $t$, we have
\[
\Pr[\det(M_T) = 0] \le \frac{d}{|\mathbb{F}|}.
\]
\end{claim}
\begin{proof} (of the Claim) Viewing the coefficients $c_e$ as the indeterminates, $\det(M_T)$ is a polynomial in $\mathbb{F}[\{c_e : e \in E(G)\}]$.

Note that there are $t$ vertex-disjoint paths from inputs to $T$. Setting the coefficients along these $t$ vertex-disjoint paths to $1$, and all other coefficients to $0$, the determinant evaluates to $\pm 1$. Hence, $\det(M_T)$ is a nonzero polynomial.

Observe that the polynomial $\det(M_T)$ has total degree $\le d$, where $d$ is the depth of the circuit. By Schwartz-Zippel Lemma, we have
\[
\Pr[\det(M_T) = 0] \le \frac{\deg(\det(M_T))}{|\mathbb{F}|} \le \frac{d}{|\mathbb{F}|}.
\]
\end{proof}

By the union bound over all $\binom{n}{t}$ choices of $T$, the probability that $\det(M_T) \neq 0$ for all $T$ is at least $1 - \frac{d \binom{n}{t}}{|\mathbb{F}|}$, as claimed.
\end{proof}

\begin{lemma}
\label{lem:zero_info}
With probability at least $1 - \frac{d \binom{n}{t-1}}{|\mathbb{F}|}$, any set of $t-1$ participants receives no information about the secret.
\end{lemma}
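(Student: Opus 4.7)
The plan is to mirror the argument of Lemma \ref{lem:recover}, but now for the $(t-1) \times (t-1)$ submatrix $M_{T,R}$ (the columns of $M$ corresponding to the random inputs $r_2,\ldots,r_t$) rather than for $M_T$. The intuition is that if $M_{T,R}$ has full rank $t-1$ for every $T \subseteq [n]$ of size $t-1$, then for every such $T$ the random vector $M_{T,R}(r_2,\ldots,r_t)^T$ is uniformly distributed over $\mathbb{F}^{t-1}$, and hence
\[
Y_T \;=\; s\cdot M_{T,1} \;+\; M_{T,R}(r_2,\ldots,r_t)^T
\]
is uniform over $\mathbb{F}^{t-1}$ regardless of $s$. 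In particular $Y_T$ is independent of $S$, so the $t-1$ participants indexed by $T$ learn nothing about the secret.

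Thus it suffices to bound, with high probability over the random choice of the edge coefficients, the event that $\det(M_{T,R}) = 0$ for some $T$ of size $t-1$. Viewing the coefficients $\{c_e\}$ as formal indeterminates, $\det(M_{T,R})$ is a polynomial in $\mathbb{F}[\{c_e\}]$ of total degree at most $d$ (the depth of $G$), since each entry of $M_{T,R}$ is a sum of products of at most $d$ of the $c_e$'s. The key structural fact I would use is exactly the second condition of Theorem~\ref{thm:main_condition}: after removing the secret-input $s$, the resulting graph is a reversed $(\ell-1,n,t-1)$-concentrator, so for any $T$ of size $t-1$ there exist $t-1$ vertex-disjoint paths from $R$ to $T$. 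Setting $c_e = 1$ on the edges of these paths and $c_e = 0$ elsewhere makes $M_{T,R}$ a permutation matrix (up to signs), so $\det(M_{T,R}) = \pm 1 \neq 0$. Hence the polynomial $\det(M_{T,R})$ is not identically zero.

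Applying Schwartz–Zippel to each $T$ of size $t-1$ gives $\Pr[\det(M_{T,R})=0] \le d/|\mathbb{F}|$. A union bound over all $\binom{n}{t-1}$ such subsets yields
\[
\Pr\!\left[\,\exists\,T,\ |T|=t-1,\ \det(M_{T,R})=0\,\right] \;\le\; \frac{d\binom{n}{t-1}}{|\mathbb{F}|}.
\]
On the complement event, the argument of the first paragraph shows that every set of $t-1$ participants has a view that is uniform on $\mathbb{F}^{t-1}$, independent of $s$, which gives privacy.

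The step I expect to be the mildest obstacle is the bookkeeping to show that the uniformity of $M_{T,R}(r_2,\ldots,r_t)^T$ really translates into zero information about $s$ in the strict entropy sense $H(S\mid Y_T)=H(S)$ required by \eqref{equ:h_pri} — but this is immediate since a full-rank linear map on uniform randomness produces a uniform output, and uniform $M_{T,R}(r_2,\ldots,r_t)^T$ plus any fixed shift $sM_{T,1}$ is still uniform, so the distribution of $Y_T$ does not depend on $s$ at all. No heavy machinery beyond Schwartz–Zippel and the connectivity condition supplied by (the second bullet of) Theorem~\ref{thm:main_condition} is needed.
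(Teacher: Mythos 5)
Your argument is essentially identical to the paper's: both fix $T$ of size $t-1$, observe that $\det(M_{T,R})$ is a degree-$\le d$ polynomial in the edge coefficients that is non-vanishing because the $t-1$ vertex-disjoint paths from $R$ to $T$ give an assignment making it $\pm 1$, apply Schwartz--Zippel plus a union bound over $\binom{n}{t-1}$ subsets, and then note that full rank of $M_{T,R}$ makes $Y_T = sM_{T,1} + M_{T,R}R$ uniform on $\mathbb{F}^{t-1}$ independent of $s$. The only cosmetic difference is that the paper invokes the superconcentrator property of $G$ directly to get the disjoint paths, while you invoke the weaker concentrator condition from Theorem~\ref{thm:main_condition} (which the paper itself remarks at the end of the section is all that is actually needed).
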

\begin{proof} 
Let $M_{T, \{2,3,\ldots,t\}}$ denote the $|T| \times (t-1)$ matrix indexed by rows $T$ and columns $2, 3, \ldots, t$, where $T \subseteq [n]$. For any $T \subseteq [n]$ of size $t-1$, we claim
\[
\Pr[\det M_{T, \{2,3,\ldots,t\}} = 0] \le \frac{d}{|\mathbb{F}|}.
\]
Viewing the coefficients $c_e$ as indeterminates over $\mathbb{F}$, $\det M_{T, \{2,3,\ldots,t\}}$ is a polynomial in $\{c_e : e \in E(G)\}$ of degree at most $d$. Since the circuit graph satisfies the second condition in Theorem \ref{thm:main_condition}, there exist $t-1$ vertex-disjoint paths connecting the inputs $x_2, \ldots, x_t$ to the outputs in $T$.
Setting the coefficients $c_e$ along these $t-1$ vertex-disjoint paths to $1$, and all other coefficients to $0$, the determinant evaluates to $\pm 1$, so $\det M_{T, \{2,3,\ldots,t\}}$ is a nonzero polynomial.
The claim follows from Schwartz-Zippel Lemma.

Taking a union bound over all $T \subseteq [n]$ of size $t-1$, we know with probability at least $1 - \frac{d{n \choose t-1}}{|\mathbb{F}|}$, $\det M_{T, \{2,3,\ldots,t\}} \not= 0$ for all $T$.

Consider any $t-1$ participants indexed by $T$, who receive the following vector in the reconstruction phase
\begin{eqnarray*}
y_T & = & M_{T, [t]} \begin{pmatrix}
s \\
r_2 \\
\vdots \\
r_t
\end{pmatrix} \\
& = & 
\begin{pmatrix}
M_{T,1} & M_{T, \{2, \ldots, t\}}
\end{pmatrix}
\begin{pmatrix}
s \\
R
\end{pmatrix} \\
& = &
M_{T, 1}s + M_{T, \{2, \ldots, t\}}R,
\end{eqnarray*}
where $R = \begin{pmatrix}
r_2 \\
\vdots \\
r_t
\end{pmatrix}$. 

Since $M_{T, \{2, \ldots, t\}}$ is of full rank, we know $M_{T, \{2, \ldots, t\}}R$ is uniformly distributed in $\mathbb{F}^{t-1}$ when $R \in \mathbb{F}^{t-1}$ is uniformly distributed.
Thus $M_{T, 1}s + M_{T, \{2, \ldots, t\}}R$ is uniformly distributed over $\mathbb{F}^{t-1}$, independent of $s$. Hence, any $t-1$ participants indexed by $T$ obtain zero information about the secret.
\end{proof}

Theorem \ref{thm:main_construction} immediately follows from Lemmas~\ref{lem:recover} and~\ref{lem:zero_info}.

The graph-theoretic condition required is slightly weaker than that of a superconcentrator; in fact, a concentrator suffices. For instance, consider a $(t-1, n)$-concentrator, add an additional input node $s$, and connect $s$ directly to all $n$ outputs. With this modification, the two connectivity conditions described above are satisfied.


\section{Consequences}

\subsection{Size lower bounds}

\begin{definition} \label{def:densely_regular} (Densely regular graph \cite{Pud94}) Let $G$ be a directed acyclic graph with $n$ inputs and $n$ outputs. Let $0 < \epsilon, \delta$ and $0 \le \mu \le 1$. We
say $G$ is \emph{$(\epsilon, \delta, \mu)$-densely regular} if for every $k \in [\mu n, n]$, there are probability distributions $\mathcal{X}$
and $\mathcal{Y}$ on $k$-element subsets of inputs and outputs respectively, such that for every $i \in [n]$,
\[
\Pr_{X \in \mathcal{X}}[i \in X] \le \frac{k}{\delta n}
\text{ and }
\Pr_{Y \in \mathcal{Y}}[i \in Y] \le \frac{k}{\delta n}
\]
and the expected number of vertex-disjoint paths from $X$ to $Y$ is at least $\epsilon k$ for randomly chosen $X \in \mathcal{X}$ and $Y \in \mathcal{Y}$.

Denote by $D(n, d, \epsilon, \delta, \eta)$ the minimal size of a $(\epsilon, \delta, \mu)$-densely regular layered directed acyclic graph with $n$ inputs and $n$ outputs and depth $d$. 
\end{definition}

The following result was proved for the case $\mathbb{F}_2$ (Lemma 3 in \cite{gal2013tight}); extending it to any finite field $\mathbb{F}_q$ is straightforward.

\begin{lemma}
\label{lem:vertex_disjoint_in_codes}
Let $\mathbb{F}_q$ be the finite field of size $q$. Let $\delta > 0$ be a constant. Let $C : \mathbb{F}_q^m \to \mathbb{F}_q^n$ be a code with minimum distance $\delta n$ and $G$ be an unrestricted arithmetic circuit (with arbitrary gates and unbounded fan-in computing $C$. For any $k \in \{1, \ldots, m\}$, and for any $k$-element subset $X$ of inputs of $G$, if we take uniformly at random a $k$-element subset $Y$ of outputs of $G$, then the expected number of vertex-disjoint paths from $X$ to $Y$ in $G$ is at least $\delta k$.
\end{lemma}

\begin{corollary}
\label{cor:codes_to_dr_graph}
(Corollary 15 in \cite{gal2013tight})
Let $0 < \rho, \delta < 1$ be constants and let $C:\mathbb{F}_q^{\rho n} \to \mathbb{F}_q^n$ be a circuit computing a code with relative distance $\delta$. If we extend the circuit by $(1-\rho)n$ dummy inputs, then its underlying graph is $(\rho\delta, \rho, \tfrac{1}{n})$-densely regular.
\end{corollary}

Corollary \ref{cor:codes_to_dr_graph} directly follows from Lemma \ref{lem:vertex_disjoint_in_codes}.

\begin{lemma}
\label{lem:concentrator_is_dr}
Let $G$ be a $(cn, n, cn)$-concentrator, where $c \in (0,1)$ is a constant. Then, after adding $(1-c)n$ dummy inputs, the graph is $(c(1-c), c, \tfrac{1}{n})$-densely regular.
\end{lemma}
\begin{proof} The proof proceeds in three steps. 
\begin{enumerate}
\item First, over a sufficiently large field, we transform the graph into a linear arithmetic circuit that computes a code of relative distance \(1-c\).
\item Second, by Lemma \ref{lem:vertex_disjoint_in_codes}, we know that any arithmetic circuit computing a code with relative distance \(1-c\) satisfies the connectivity requirement of Lemma~\ref{lem:vertex_disjoint_in_codes}.
\item Third, by applying Corollary~\ref{cor:codes_to_dr_graph}, we conclude that the graph is densely regular.
\end{enumerate}

Let $\mathbb{F}_q$ be a finite field, where $q$ is sufficiently large. We convert the $(cn, n, cn)$-concentrator graph $G$ into a linear arithmetic circuit, where each vertex replaced a weighted addition gate with random coefficients. Let $C:\mathbb{F}_q^{cn} \to \mathbb{F}_q^n$ be the linear mapping computed by the circuit, which computes a linear transformation $C(x) = Hx$, where $H$ is an $n \times cn$ matrix, and $x \in \mathbb{F}_q^{cn}$.

We claim that there exists a circuit \(C(x)=Hx\) such that, for every subset
\(S \subseteq [n]\) with \(|S| = cn\), the row submatrix \(H_S\) has full rank.
Indeed, for any fixed \(S\), the determinant \(\det(H_S)\) is a nonzero polynomial of degree at most the depth of the graph \(G\).
By the definition of a \((cn,n,cn)\)-concentrator, there exist \(cn\) vertex-disjoint paths connecting the \(cn\) inputs to the vertices in \(S\), which guarantees that \(\det(H_S)\) is not identically zero.
Therefore, by the Schwartz--Zippel lemma,
\[
\Pr[\det(H_S)=0] \le \frac{d}{q},
\]
where \(d\) denotes the depth of the graph \(G\).

Fix such a code \(C\). We claim that \(C\) has relative distance at least \(1-c\). Let \(x \in \mathbb{F}_q^{cn}\) be any nonzero vector. For any subset \(S \subseteq [n]\) with \(|S| = cn\), the submatrix \(C_S\) has full rank, and
hence \(C(x)_S = x C_S \neq \vec{0}\). Therefore, no nonzero codeword can be zero on more than \(cn-1\) coordinates, which implies that the Hamming weight of \(C(x)\) is at least \(n - cn + 1\). 

The second step follows from Lemma~\ref{lem:vertex_disjoint_in_codes}, and the third step follows from Corollary~\ref{cor:codes_to_dr_graph}.

\end{proof}

\begin{theorem} 
\label{thm:pudlak_dr_lb}
\cite{Pud94} Let $\epsilon, \delta > 0$ be constants. Then for every $n$, $\mu \in [1/n, 1]$, and $d \ge 3$, we have
\[
D(n, d, \epsilon, \delta, \mu) \ge \Omega_{d, \epsilon, \delta}(n \cdot \lambda_d(1/\mu)).
\]
\end{theorem}

\begin{theorem} [Theorem \ref{thm:ss_ckt_lb}]
Let \(\mathbb{F}_q\) be a field and let \(c \in (0,1)\) be a constant.
Let \(C : \mathbb{F}_q^{t} \to \mathbb{F}_q^{n}\), with \(t = cn\), be an
unrestricted arithmetic circuit of depth \(d\) (allowing arbitrary gates
and unbounded fan-in) that computes the shares of a \((t,n)\)-threshold
secret sharing scheme. The inputs consist of the secret together with
\(t-1\) random elements, and the outputs are the \(n\) shares.
Then the number of wires in \(C\) is at least $\Omega_{d, c}\!\left(\lambda_d(n)\cdot n\right)$.
\end{theorem}
\begin{proof} 
By Theorem~\ref{thm:main_condition}, the circuit \(C\), viewed as a graph \(G\),
is a \((cn,n,cn)\)-concentrator. By Lemma~\ref{lem:concentrator_is_dr},
the graph \(G\) is \((c(1-c),\,c,\,1/n)\)-densely regular. Finally,
Theorem~\ref{thm:pudlak_dr_lb} implies that \(G\) has size
\(\Omega_{d,c}\!\left(\lambda_d(n)\cdot n\right)\).
\end{proof}

\subsection{Size upper bounds}

Let $\scsize_d(m,n)$ denote the minimum size of a depth-$d$ superconcentrator with $m$ inputs and $n$ outputs, where $m$ and $n$ need not be equal ($m$ may be larger or smaller than $n$).

For computing SS schemes, we need unbalanced superconcentrators, where the number of inputs is the threshold value $t$, and the number of outputs is the number of participants $n$.

From Table \ref{table:sc_bounds}, we know there exists a linear-size $(n, n)$-superconcentrator of depth $O(\alpha(n))$. By removing some inputs (and the incident edges), we obtain an $O(n)$-size $(m, n)$-superconcentrator depth $O(\alpha(n))$, for any $m \le n$. Size $O(n)$ is clearly optimal (up to a multiplicative constant), since at least $n$ edges are required to connect the $n$ outputs. The question is, given $m, n$, can we achieve a depth smaller than $O(\alpha(n))$ for an $(m,n)$-superconcentrator?

\begin{definition} [Partial superconcentrator \cite{DDPW83}] An $(m, n)$-network is a $(p, q)$-partial superconcentrator if for any $S \subseteq [m]$ and $T \subseteq [n]$ with $|S| = |T|$ and $|S| \in [q,p]$, there exist $|S| - q$ vertex-disjoint paths connecting $S$ and $T$.
\end{definition}

Let $\scsize_d(m, n, p, q)$ denote the minimal size of an $(m, n)$-network of depth at most $d$ which is a $(p, q)$-partial superconcentrator.

\subsection{Depth 2}

In this subsection, we construct unbalanced superconcentrators of depth 2, which are used as building blocks for higher depth.

\begin{lemma} \label{lem:depth2_partial_sc} 
For any $r$, we have
\[
\scsize_2(n, m, n/r, 2n/(3r)) = O(m \log m).
\]
\end{lemma}
\begin{proof} 
As illustrated in Figure \ref{figure:sc_depth2}, we construct a depth-2 network with $n$ inputs, $m$ outputs, and a middle layer containing $4n/(3r)$ vertices. The construction consists of two layers:
\begin{itemize}
    \item Top layer: $(n, 4n/(3r), n/r)$-concentrator;
    \item Bottom layer: $(m, 4n/(3r), n/r)$-concentrator.
\end{itemize}

\begin{figure}[h]
\centering
\includegraphics[scale=0.4]{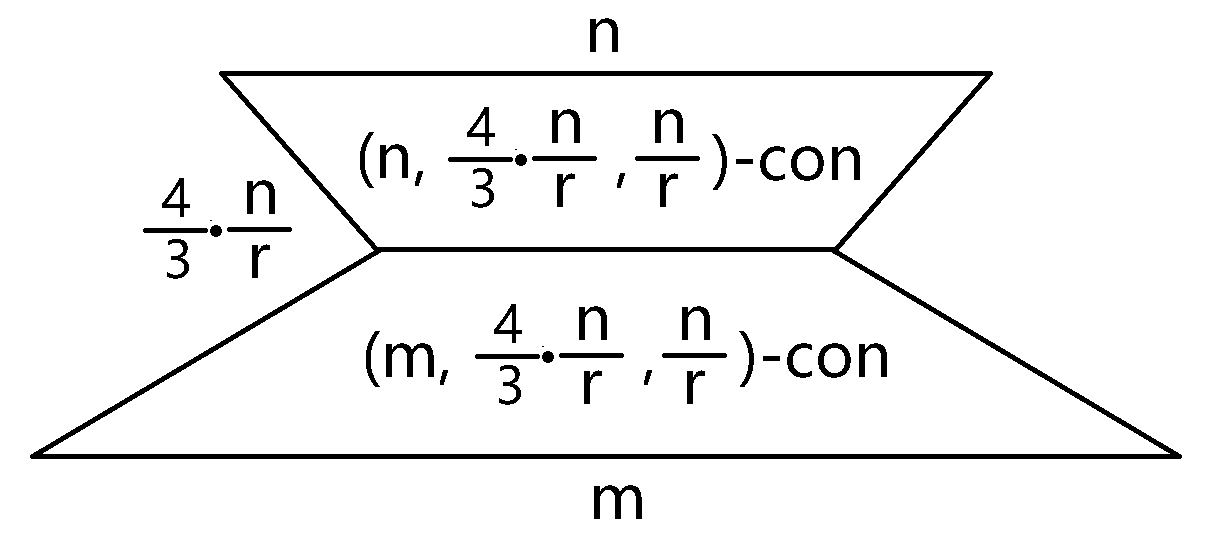}
\caption{Construction of depth-2 superconcentrator}
\label{figure:sc_depth2}
\end{figure}

Consider $S \subseteq [n]$ and $T \subseteq [m]$ of size $2n/(3r) + \Delta$, where $0 \le \Delta \le n/(3r)$. By the definition of the top-layer $(n,4n/(3r),n/r)$-concentrator, $S$ is connected to $|S|$ vertices in the middle layer, denoted by $S'$. Similarly, $T$ is connected to $|T|$ vertices in the middle layer, denoted by $T'$. Then
\[
|S' \cap T'| \ge |S'| + |T'| - 4n/(3r) = 2 \Delta.
\]

Thus, $S$ and $T$ are connected by $2\Delta$ vertex-disjoint paths (in fact, $\Delta$ paths suffice to satisfy the partial superconcentrator condition). Therefore, the $(n,m)$-network is a $(n/r, 2n/(3r))$-partial superconcentrator.
\end{proof}

By taking the union of $O(\log n)$ partial superconcentrators, we obtain the following upper bound on depth-2 superconcentrators.

\begin{lemma}
\label{lem:sc_depth2_size}
For any $n \le m$, we have
\[
\scsize_2(n, m) = O(m \log m \cdot \log n).
\]
\end{lemma}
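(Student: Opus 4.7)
The plan is to build the $(n, m)$-superconcentrator as the union of $O(\log n)$ depth-$2$ partial superconcentrators obtained from Lemma \ref{lem:depth2_partial_sc} at a geometric sequence of scales, each contributing its own private middle layer but sharing the $n$ inputs and $m$ outputs. Concretely, for $i = 0, 1, \ldots, \lceil \log_{3/2} n \rceil$, I instantiate Lemma \ref{lem:depth2_partial_sc} with $r = (3/2)^i$ to obtain a $(p_i, q_i)$-partial superconcentrator with $p_i = n(2/3)^i$ and $q_i = (2/3) p_i = p_{i+1}$, each of size $O(m \log m)$. Summing over the $O(\log n)$ levels yields the claimed $O(m \log m \log n)$ edge count, so the remaining task is to verify that the union is a full superconcentrator.

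To do so, fix $S \subseteq [n]$ and $T \subseteq [m]$ with $|S| = |T| = k$. Because consecutive intervals $[q_i, p_i]$ telescope via $q_i = p_{i+1}$, they cover $(0, n]$, so some $i_0$ satisfies $k \in [q_{i_0}, p_{i_0}]$. Apply the level-$i_0$ partial superconcentrator to $(S, T)$ to extract $k - q_{i_0}$ vertex-disjoint paths; let $S^{(1)} \subseteq S$ and $T^{(1)} \subseteq T$ be the $q_{i_0}$-sized subsets of endpoints left unmatched. Since $|S^{(1)}| = p_{i_0+1} \in [q_{i_0+1}, p_{i_0+1}]$, apply the level-$(i_0 + 1)$ partial superconcentrator to $(S^{(1)}, T^{(1)})$ to extract another $(1/3)|S^{(1)}|$ paths, and iterate at levels $i_0 + 2, i_0 + 3, \ldots$. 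Crucially, different levels have disjoint middle layers and each iteration only consumes inputs and outputs left untouched by earlier iterations, so the paths collected across iterations are automatically vertex-disjoint. Because the unmatched set shrinks by a factor of $2/3$ per step, the recursion terminates after $O(\log n)$ iterations.

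The main technical obstacle is the base case of this recursion: once only one unmatched input-output pair remains, the relevant partial superconcentrator at that level gives $\lfloor |S^{(j)}| - q \rfloor = 0$ paths, so it cannot complete the routing by itself. I would handle this by appending one additional cheap layer --- for instance, a depth-$2$ $(n, m, O(1))$-concentrator built from Lemma \ref{lem:linear_concentrator}, of size $O(m \log n)$, which can route any single leftover input to any single leftover output. This auxiliary structure does not affect the overall $O(m \log m \log n)$ bound, and together with the iterative argument above it produces $k$ vertex-disjoint paths between every equal-sized pair $(S, T)$, establishing the lemma.
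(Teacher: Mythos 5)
Your overall plan --- unioning $O(\log n)$ depth-2 partial superconcentrators at geometrically spaced scales with disjoint private middle layers --- is exactly what the paper does, and the iterative path-extraction argument you supply is a correct and welcome elaboration of a verification step the paper states in a single sentence. You also correctly flag a genuine corner case the paper glosses over: since $q_i = p_{i+1}$ telescopes only down to roughly $1$, the partial superconcentrators at the bottom of the chain guarantee zero additional paths, so the final unmatched input--output pair cannot be routed by them.

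The gap is in your proposed repair. An $(n, m, O(1))$-concentrator (Definition~\ref{def:concentrator}) only guarantees vertex-disjoint paths from a chosen set of $O(1)$ inputs to \emph{some} set of outputs; it gives no control over \emph{which} outputs are reached, so it cannot route a given leftover input to a given leftover output, which is precisely what the base case demands. (Also, Lemma~\ref{lem:linear_concentrator} yields depth-1, not depth-2, concentrators, and the resulting size would be $O(m \log m / \log n)$, not $O(m \log n)$.) The correct fix is simpler and cheaper: append one fresh auxiliary middle vertex $v$, with an edge from each of the $n$ inputs to $v$ and from $v$ to each of the $m$ outputs. This costs only $n + m = O(m)$ edges, the vertex $v$ is trivially disjoint from every other middle layer, and it routes the single leftover pair $(s, t)$ via $s \to v \to t$. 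That closes the base case without affecting the $O(m \log m \log n)$ size bound, so with this substitution your proof is complete and matches the paper's construction.
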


\begin{proof} 
Let
\[
r = \left(\frac{3}{2}\right)^0, \left(\frac{3}{2}\right)^1, \ldots, \left(\frac{3}{2}\right)^{\ell}, \quad \text{where } \ell = \log_{3/2} n - 1.
\]
This gives a sequence of partial superconcentrators
\[
\left(n, \frac{2n}{3}\right), \left(\frac{2n}{3}, \frac{4n}{9}\right), \ldots, \left(\frac{n}{(3/2)^{\ell-1}}, \frac{n}{(3/2)^{\ell}}\right),
\]
each of size $O(m \log m)$ by Lemma \ref{lem:depth2_partial_sc}.

By combining these $\ell = O(\log n)$ partial superconcentrators and merging their inputs and outputs, we obtain an $(n, m)$-superconcentrator of size $O(m \log m \cdot \log n)$.
\end{proof}

When $m \ge n^{2+\Omega(1)}$, we prove there exists a depth-2 $(m, n)$-superconcentrator of linear size.

\begin{lemma}
\label{lem:sc_depth2_linear_size}
For any $\epsilon > 0$, if $m \ge n^{2+\epsilon}$,
\[
\scsize_2(m, n) = O\left(\frac{m}{\epsilon}\right).
\]
\end{lemma}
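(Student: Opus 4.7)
The plan is to apply the union-of-partial-superconcentrators idea from Lemma~\ref{lem:sc_depth2_size}, but with only $L = O(1/\epsilon)$ partial SCs so as to beat the $\Theta(\log n)$ blow-up that the original construction suffers. The key quantitative observation is that under the hypothesis $m \ge n^{2+\epsilon}$, the ratio $\log(m/k)$ in Lemma~\ref{lem:linear_concentrator} satisfies $\log(m/k) \ge (1+\epsilon)\log n$ for every $k \le n$, so each Pinsker-type top-layer concentrator is exceptionally cheap per unit of capacity.

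Concretely, I would pick $L = \Theta(1/\epsilon)$ geometric scales $k_0 < k_1 < \cdots < k_L \approx n$ with consecutive ratio $c = n^{1/L} = n^{\Theta(\epsilon)}$, and for each scale $k_j$ build a depth-$2$ partial SC with middle layer of size $\Theta(k_j)$, top layer a $(m, \Theta(k_j), k_j)$-concentrator from Lemma~\ref{lem:linear_concentrator}, and bottom layer a reversed $(n, \Theta(k_j), k_j)$-concentrator. Summing sizes: the ``expansion'' part of each top concentrator costs $O(k_j/\log(m/k_j)) = O(k_j/((1+\epsilon)\log n))$ by Lemma~\ref{lem:linear_concentrator}, contributing only $O(n/(\epsilon\log n))$ across all scales; the bottom layers similarly total $O(n/\epsilon)$. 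The dominant cost is the obligatory ``at least one outgoing edge per input'' contribution of $m$ per partial SC (required because any single input may appear in a $k_j$-subset that the partial SC is supposed to route), giving total $O(m \cdot L) = O(m/\epsilon)$ top-layer edges, which absorbs the other terms since $m \ge n^{2+\epsilon}$.

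The main obstacle is constructing honest partial SCs covering a scale-range factor of $n^{\Theta(\epsilon)}$ each. The Hall-style estimate $|N(S) \cap N(T)| \ge |N(S)| + |N(T)| - M$ used in Lemma~\ref{lem:depth2_partial_sc} only supports a range factor of at most~$2$, because taking $|S| = q$ gives the binding constraint $M \le 2q$. To break past this I would replace the layers with stronger concentrators of expansion $1+\delta$ on subsets of size up to the capacity (instead of expansion exactly $1$), yielding $|N(S) \cap N(T)| \ge 2(1+\delta)|S| - M$; iterating this device or amplifying expansion through a short cascade should widen the admissible range to $n^{\Theta(\epsilon)}$ while keeping each concentrator within the linear-size budget from Lemma~\ref{lem:linear_concentrator} (which gets cheaper precisely because $\log(m/k)$ is large under our hypothesis). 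Verifying that the union of $O(1/\epsilon)$ such wide-range partial SCs collectively routes the full $|S|$ vertex-disjoint paths for every $|S| \le n$ (and not merely $|S|-q$, as each partial SC guarantees individually) is the technical crux, and will likely proceed by a telescoping argument analogous to the chaining implicit in Lemma~\ref{lem:sc_depth2_size}.
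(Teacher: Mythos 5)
The proposal diverges from the paper's proof and contains a genuine gap.

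The paper's construction is a single depth-2 network with one middle layer of $m/r$ vertices, where $r = (m/n)^{1/(1+\epsilon)}$. The $n$-to-middle layer is the \emph{complete} bipartite graph (trivially an $(n, m/r)$-superconcentrator in depth~1), costing $n \cdot m/r = m\cdot(n^{2+\epsilon}/m)^{1/(1+\epsilon)} \le m$ exactly because $m \ge n^{2+\epsilon}$; the $m$-to-middle layer is one $(m, m/r, n)$-concentrator from Lemma~\ref{lem:linear_concentrator}, of size $O(m\log(r^{1+\epsilon})/\log(r^{\epsilon})) = O(m/\epsilon)$. Any $|Y|\le n$ outputs match into the middle via the concentrator, and those middle vertices reach any $|Y|$ inputs via completeness, so the network is a full superconcentrator. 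No partial superconcentrators appear at all.

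Your plan — unioning $O(1/\epsilon)$ depth-2 partial SCs, each spanning a scale-range factor $n^{\Theta(\epsilon)}$ — cannot be repaired, because the inclusion--exclusion overlap bound in a single middle layer caps the range factor of a depth-2 partial SC at $2$, no matter the expansion of the layers. With middle-layer size $M$ and expansion factor $1+\delta$ on sets up to size $p$, you need $M \ge (1+\delta)p$ just to house $N(S)$ for $|S|=p$, while the overlap requirement at $|S|=q$ forces $M \le (2+2\delta)q$; together $p/q \le 2$ regardless of $\delta$. Your escape hatch --- ``iterating this device or amplifying expansion through a short cascade'' --- would necessarily add layers, violating the depth-$2$ constraint. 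So the ``technical crux'' you defer is not a gap to be filled but a wall; the paper avoids it entirely by making one side complete (which $m \ge n^{2+\epsilon}$ pays for) rather than trying to cover $[1,n]$ by a short chain of partial SCs. As a secondary issue, your per-scale size estimate for the top concentrator, $O(k_j/\log(m/k_j))$, does not follow from Lemma~\ref{lem:linear_concentrator}: for an $(m, \Theta(k_j), k_j)$-concentrator the bound is $O\bigl(m\log(m/k_j)/\log\Theta(1)\bigr)$, which exceeds $m$ rather than being dominated by it; to get a cheap concentrator you need the middle layer to exceed the concentration threshold by a polynomial factor, as in the paper's choice of $m/r$ versus $m/r^{1+\epsilon}$.
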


\begin{proof}
Construct a depth-2 $(m, n)$-network as illustrated in Figure \ref{figure:sc_depth2_linear}. Let the middle layer contain $\frac{m}{r}$ vertices, where
\[
r = \left(\frac{m}{n}\right)^{\frac{1}{1+\epsilon}}.
\]

\begin{itemize}
\item The top layer is a complete bipartite graph connecting all $n$ inputs to the middle layer vertices.
\item The bottom layer is a $(m, \frac{m}{r}, n)$-concentrator.
\end{itemize}

\begin{figure}[h]
\centering
\includegraphics[scale=0.6]{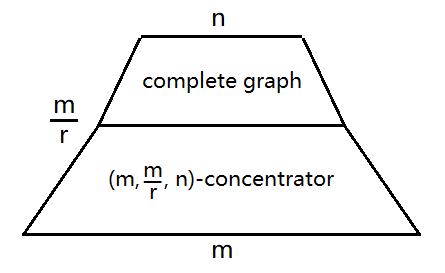}
\caption{Construction of linear-size depth-2 superconcentrator}
\label{figure:sc_depth2_linear}
\end{figure}

\textbf{Correctness:} For any subset of outputs $Y$ with $|Y| \le n$, the bottom-layer $(m, \frac{m}{r}, n)$-concentrator connects $Y$ to $|Y|$ middle-layer vertices. These middle-layer vertices are connected to any chosen $|Y|$ inputs via the complete bipartite top layer. Hence the network is indeed an $(m, n)$-superconcentrator.

\textbf{Size estimation:} The top-layer complete bipartite graph has size
\[
n \cdot \frac{m}{r} = n \cdot m \cdot \left(\frac{n}{m}\right)^{\frac{1}{1+\epsilon}} = m \cdot \left(\frac{n^{2+\epsilon}}{m}\right)^{\frac{1}{1+\epsilon}} \le m,
\]
since $m \ge n^{2+\epsilon}$.  

By Lemma \ref{lem:linear_concentrator}, the bottom-layer $(m, \frac{m}{r}, n)$-concentrator has size
\[
O\left(m \frac{\log(r^{1+\epsilon})}{\log(r^\epsilon)}\right) = O\left(\frac{m}{\epsilon}\right).
\]
Thus the total size is
$O(m) + O\left(\frac{m}{\epsilon}\right) = O\left(\frac{m}{\epsilon}\right).$
\end{proof}

\subsection{Depth 3}

When $m \ge n (\log n)^{2 + \Omega(1)}$, we prove there exists a depth-3 $(m, n)$-superconcentrator of linear size.

\begin{lemma}
\label{lem:d3_linear_sc}
For any $\epsilon > 0$, if $m \ge n (\log n)^{2 + \epsilon}$,
\[
\scsize_3(m, n) = O\left( \frac{m}{\epsilon} \right).
\]	
\end{lemma}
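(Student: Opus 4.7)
The plan is to mirror the proof of Lemma \ref{lem:sc_depth2_linear_size}, but to replace the costly complete bipartite layer with a depth-$2$ superconcentrator from Lemma \ref{lem:sc_depth2_size}. Concretely, I would construct the depth-$3$ $(m,n)$-superconcentrator by inserting a single intermediate vertex layer of size $m_1 = n (\log n)^{\epsilon/2}$ between the $m$ inputs and the $n$ outputs. Connect the $m$ inputs to the middle layer by a depth-$1$ $(m, m_1, n)$-concentrator (Lemma \ref{lem:linear_concentrator}), and connect the middle layer to the $n$ outputs by a depth-$2$ $(m_1, n)$-superconcentrator (Lemma \ref{lem:sc_depth2_size}). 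The total depth is $1 + 2 = 3$, and the superconcentrator property follows by composition: for any $|S|=|T|=k \le n$ with $S$ a set of inputs and $T$ a set of outputs, the top concentrator routes $S$ via $k$ vertex-disjoint paths to some $k$-subset $S'$ of the middle layer, and the depth-$2$ subnetwork then routes $S'$ to $T$ via $k$ vertex-disjoint paths.

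It then remains to verify the size bound. By Lemma \ref{lem:linear_concentrator}, the top layer has size
\[
O\!\left( m \cdot \frac{\log(m/n)}{\log(m_1/n)} \right) = O\!\left( m \cdot \frac{(2+\epsilon)\log\log n}{(\epsilon/2)\log\log n} \right) = O(m/\epsilon).
\]
For the bottom subnetwork, Lemma \ref{lem:sc_depth2_size} (applied after reversing, noting $m_1 > n$) gives size $O(m_1 \log m_1 \log n) = O\bigl(n (\log n)^{2 + \epsilon/2}\bigr)$. Using the hypothesis $m \ge n (\log n)^{2+\epsilon}$, this is bounded by $O(n (\log n)^{2+\epsilon}/\epsilon) = O(m/\epsilon)$ for $n$ sufficiently large, since $\epsilon \le (\log n)^{\epsilon/2}$ eventually. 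Summing the two contributions gives the claimed $O(m/\epsilon)$ bound.

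The main delicacy is the choice of $m_1$: it must be strictly larger than $n$ so that the Pinsker-type concentrator from Lemma \ref{lem:linear_concentrator} can accommodate capacity $n$ without the denominator $\log(m_1/n)$ collapsing, yet small enough that the embedded depth-$2$ subnetwork stays within the $O(m/\epsilon)$ budget. The margin $(\log n)^{\epsilon/2}$ is tuned to split the available polylogarithmic factor $(\log n)^{2+\epsilon}$ so that the concentrator absorbs a constant fraction of $\log(m/n)$ while the depth-$2$ subnetwork absorbs the remaining two $\log n$ factors inherent in Lemma \ref{lem:sc_depth2_size}. This balancing is the crux of the argument, and is precisely what forces the assumption to be $m \ge n (\log n)^{2+\epsilon}$ rather than the weaker $m \ge n \log^{2+o(1)} n$.
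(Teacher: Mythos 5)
Your architecture mirrors the paper's: a one-layer concentrator from the $m$ side down to an intermediate layer, followed by a depth-$2$ superconcentrator from that layer to the $n$ side. The gap is in your choice of a \emph{fixed} intermediate size $m_1 = n(\log n)^{\epsilon/2}$, independent of $m$. When bounding the top concentrator you replace $\log(m/n)$ by $(2+\epsilon)\log\log n$, but that is its value at the boundary $m = n(\log n)^{2+\epsilon}$ and is only a \emph{lower} bound in general. The lemma has to hold for every $m \ge n(\log n)^{2+\epsilon}$; take $m = n^2$, which satisfies the hypothesis for all large $n$. Then $\log(m/n) = \log n$ while $\log(m_1/n) = (\epsilon/2)\log\log n$, and the concentrator cost from Lemma~\ref{lem:linear_concentrator} becomes $O\bigl(m\cdot \frac{\log n}{(\epsilon/2)\log\log n}\bigr)$, which is $\omega(m/\epsilon)$. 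So your estimate only works in the narrow regime where $m/n$ is polylogarithmic in $n$.

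The paper sidesteps this by letting the intermediate layer scale with $m$: it sets $r = (m/n)^{1/(1+\epsilon/2)}$ and uses $m/r$ middle vertices, so the concentrator's log-ratio becomes $\frac{\log r^{1+\epsilon/2}}{\log r^{\epsilon/2}} = \frac{1+\epsilon/2}{\epsilon/2}$, a constant independent of $m$, yielding $O(m/\epsilon)$ uniformly. It also needs $\log(m/r) = O(\log n)$ to keep the depth-$2$ $(n, m/r)$-superconcentrator at $O(m)$, which it guarantees by first dispatching $m \ge n^3$ via Lemma~\ref{lem:sc_depth2_linear_size} and assuming $m < n^3$ thereafter — a case split your proposal also omits. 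To repair your argument, choose $m_1$ as a function of $m$ (e.g.\ $m_1 = m\cdot(n/m)^{1/(1+\epsilon/2)}$) so that $\log(m/n)/\log(m_1/n)$ stays constant, and treat very large $m$ separately with the depth-$2$ result.
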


\begin{proof}
If $m \ge n^3$, by Lemma \ref{lem:sc_depth2_linear_size}, we have
\[
\scsize_3(m, n) \le \scsize_2(m, n) = O(m).
\]
Assume $m < n^3$ from now on.

\begin{figure}[h]
\centering
\includegraphics[scale=0.5]{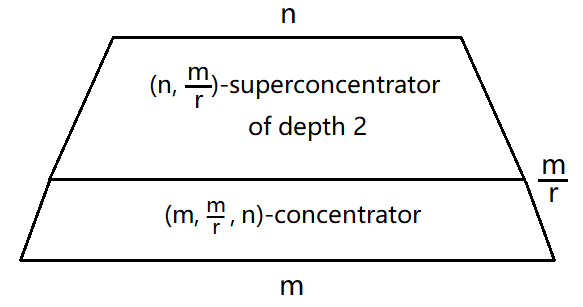}
\caption{Construction of linear-size depth-3 superconcentrator}
\label{figure:sc_depth3_linear}
\end{figure}

We construct an $(n, m)$-network consisting of two parts (Figure \ref{figure:sc_depth3_linear}):
\begin{itemize}
\item The top part is a depth-2 $(n, m/r)$-superconcentrator.
\item The bottom part is a $(m, m/r, n)$-concentrator, where
\[
r = \left( \frac{m}{n} \right)^{\frac{1}{1 + \epsilon/2}}.
\]
\end{itemize}

\textbf{Correctness:} Consider any subsets $X \subseteq [n]$ and $Y \subseteq [m]$ with $|X| = |Y|$. The bottom-layer $(m, m/r, n)$-concentrator connects $Y$ to $|Y|$ middle-layer vertices $Z$. The top-layer $(n, m/r)$-superconcentrator provides $|X|$ vertex-disjoint paths from $X$ to $Z$. Thus there are $|X|$ vertex-disjoint paths connecting $X$ and $Y$.

\textbf{Size estimation:} By Lemma \ref{lem:sc_depth2_size}, the top-layer depth-2 $(n, m/r)$-superconcentrator has size
\[
O\left(\frac{m}{r} \cdot \log \frac{m}{r} \cdot \log n \right) \le O\left(\frac{m}{r} (\log n)^2 \right) = O(m),
\]
since $\frac{m}{n} \ge (\log n)^{2+\epsilon}$ implies $(m/n)^{1/(1+\epsilon/2)} \ge (\log n)^2$.

By Lemma \ref{lem:linear_concentrator}, the bottom-layer $(m, m/r, n)$-concentrator has size
\[
O\left(m \frac{\log(r^{1+\epsilon/2})}{\log(r^{\epsilon/2})}\right) = O\left(\frac{m}{\epsilon}\right).
\]
Hence the total size is $O(m) + O\left(\frac{m}{\epsilon}\right) = O\left(\frac{m}{\epsilon}\right)$.
\end{proof}

\subsection{Higher depth}

As we have shown, when $m \ge n^{2+\epsilon}$ for some constant $\epsilon>0$, depth 2 suffices to achieve linear size; 
when $m \ge n (\log n)^{2+\epsilon}$, depth 3 suffices; 
and when $m$ is only ``slightly larger'' than $n$, higher depth is required.

In this subsection, we prove that for any $m \ge n$, depth $\alpha(m,n) + O(1)$ suffices, 
where $\alpha(m,n)$ denotes the two-parameter inverse Ackermann function (Definition \ref{def:inv_ack}).

\begin{lemma}
\label{lem:depth_d_sc_size}
For any depth $d \ge 3$, the size of a depth-$d$ $(m,n)$-superconcentrator satisfies
\[
\scsize_d(m, n) = O(m \, \lambda_d(n)).
\]
\end{lemma}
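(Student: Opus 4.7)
The plan is to prove this by induction on $d \ge 3$, following the classical construction of Dolev-Dwork-Pippenger-Wigderson and Pudl\'ak for balanced superconcentrators (Table \ref{table:sc_bounds}), with parameter shifts to handle the unbalanced case $m \ge n$.

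For the base case $d = 3$: when $m \ge n (\log n)^3$, Lemma \ref{lem:d3_linear_sc} already yields $O(m) \le O(m \log\log n)$. Otherwise, prepend a depth-1 $(m, n, n)$-concentrator of size $O(m)$ (via Lemma \ref{lem:linear_concentrator}, possibly with a slight blow-up of the intermediate width to meet the $n \ge 1.1 k$ hypothesis) onto a classical depth-3 balanced $(n, n)$-superconcentrator of size $O(n \log\log n)$. The total is $O(m) + O(n \log\log n) = O(m \log\log n) = O(m \lambda_3(n))$, since $m \ge n$.

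For the inductive step $d \ge 4$, the construction is a three-layer sandwich: a top depth-1 $(m, w)$-concentrator, a middle depth-$(d-2)$ $(w, w)$-superconcentrator supplied by induction (or by the classical balanced bound), and a bottom reversed depth-1 $(n, w)$-concentrator. The intermediate width $w$ is chosen by a Pudl\'ak-style shrinking schedule so that the sizes $w \lambda_{d-2}(w)$ telescope, under the recursion $\lambda_d(n) = \lambda_{d-2}^*(n)$, to $O(m \lambda_d(n))$. Each tier contributes $O(m)$ edges from the outer concentrators (Lemma \ref{lem:linear_concentrator}), summing to $O(m \lambda_d(n))$ over the $\lambda_d(n)$ levels of recursion. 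The superconcentrator property is routine to verify via Menger: any equal-sized $X \subseteq [m]$ and $Y \subseteq [n]$ are routed into $|X|$ disjoint middle entry points by the outer concentrators and linked internally by the balanced inner superconcentrator.

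The main obstacle is calibrating the intermediate widths so that the recursion telescopes cleanly, in particular handling the boundary of Lemma \ref{lem:linear_concentrator}'s applicability ($n \ge 1.1 k$, $m \ge k + 1$) when successive widths come close to each other. Standard inverse-Ackermann calculus (Proposition \ref{prop:alpha_well_defined}) guarantees a suitable choice exists; the unbalanced case $m \ge n$ differs from the classical balanced case only in that the top concentrator's size is $O(m)$ rather than $O(n)$, which is absorbed by the target $O(m \lambda_d(n))$.
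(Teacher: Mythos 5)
Your proposal attempts to redo the Dolev--Dwork--Pippenger--Wigderson/Pudl\'ak cascade from scratch with unbalanced outer concentrators, which is a genuinely different route from the paper's. The paper's proof is a near-trivial black-box reduction: if $m \ge n^{2.5}$, Lemma~\ref{lem:sc_depth2_linear_size} already gives a linear-size depth-$2$ superconcentrator; otherwise take a classical depth-$d$ $(m,m)$-superconcentrator of size $O(m\lambda_d(m))$, delete $m-n$ outputs, and observe (via the recursion $\lambda_d(k)=\lambda_d(\lambda_{d-2}(k))+1$) that $m\le n^{2.5}$ forces $\lambda_d(m)=O(\lambda_d(n))$. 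No new construction is needed, and depth is preserved exactly because deleting output vertices does not add layers.

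Your proposal, as written, has a concrete depth bug in the base case. Prepending a depth-$1$ $(m,n,n)$-concentrator to a depth-$3$ $(n,n)$-superconcentrator produces a graph of depth $4$, not $3$: the concentrator contributes one layer of edges and the superconcentrator three more, and identifying the concentrator's outputs with the superconcentrator's inputs does not save a layer. There is no way to repair this by using a depth-$2$ middle instead, since $\scsize_2(n,n)=\Theta(n\log^2 n/\log\log n)$ is too large to be absorbed into $O(m\log\log n)$ when $m$ is only slightly above $n$. The inductive step has a related structural problem: the DDPW/Pudl\'ak bound $O(n\lambda_d(n))$ for fixed depth $d$ is not obtained by a depth-decreasing recursion whose sizes ``telescope''; it comes from taking a union of $\Theta(\lambda_d(n))$ depth-$d$ partial superconcentrators, each of size $O(n)$, handling nested size ranges. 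Describing a three-layer sandwich with a depth-$(d-2)$ middle ``supplied by induction'' conflates recursion on depth with recursion on the size range and leaves the accounting of the $\lambda_d(n)$ factor unjustified. In short, the paper sidesteps all of this by never re-proving the balanced bound; you should import it as a black box and just show $\lambda_d(m)=O(\lambda_d(n))$ in the regime $n\le m\le n^{2.5}$, falling back to the depth-$2$ linear construction when $m>n^{2.5}$.
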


\begin{proof}
We consider two cases for $m$.

\textbf{Case 1:} $m \ge n^{2.5}$.  

By Lemma \ref{lem:sc_depth2_linear_size}, depth 2 suffices to construct an $(m,n)$-superconcentrator of size $O(m)$.  
Since increasing depth cannot increase size, for any $d \ge 2$ we have
\[
\scsize_d(m, n) \le \scsize_2(m, n) = O(m) = O(m \, \lambda_d(n)).
\]

\textbf{Case 2:} $m \le n^{2.5}$.  

Observe that, by monotonicity of $\lambda_d$, 
\[
\lambda_d(m) \le \lambda_d(n^{2.5}).
\]

If $d=3$, $\lambda_3(n^{2.5}) = O(\log\log n^{2.5}) = O(\log\log n) = O(\lambda_3(n))$.  
If $d \ge 4$, by the definition of $\lambda_d$ (iterated $\lambda_{d-2}$):
\[
\lambda_d(n^{2.5}) 
= \min \{ i : \lambda_{d-2}^{(i)}(n^{2.5}) \le 1 \} 
= \lambda_d(\lambda_{d-2}(n^{2.5})) + 1
\le \lambda_d(\lceil 2.5 \log n \rceil) + 1
\le \lambda_d(n) + 1
= O(\lambda_d(n)),
\]
where we used $\lambda_2(n) = \log n$ and the monotonicity of $\lambda_d$ in the last step.  

Hence, in all cases we have
\[
\lambda_d(m) \le O(\lambda_d(n)).
\]

Finally, from Table \ref{table:sc_bounds}, there exists a depth-$d$ \emph{superconcentrator} with $m$ inputs and $m$ outputs of size $O(m \, \lambda_d(m))$.  
By removing $m-n$ outputs and their incident edges, we obtain an $(m,n)$-superconcentrator.  
Thus,
\[
\scsize_d(m,n) \le \scsize_d(m,m) = O(m \, \lambda_d(m)) \le O(m \, \lambda_d(n)),
\]
as desired.
\end{proof}

\begin{theorem}
\label{thm:general_d_linear_sc}
For depth $d \ge 3$, if 
\[
m \ge n \, (\lambda_d(n))^{1+\epsilon}
\]
for some $\epsilon > 0$, then the minimal size of a depth-$(d+1)$ $(n,m)$-superconcentrator satisfies
\[
S_{d+1}(m, n) = O\left(\frac{m}{\epsilon}\right).
\]
\end{theorem}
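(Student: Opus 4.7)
The plan is to imitate the construction in Lemma \ref{lem:d3_linear_sc}, but with the depth-$2$ top layer replaced by a depth-$d$ unbalanced superconcentrator supplied by Lemma \ref{lem:depth_d_sc_size}. First, if $m \ge n^{2.5}$ then depth $2 \le d+1$ already suffices by Lemma \ref{lem:sc_depth2_linear_size}, so I may assume $n \le m \le n^{2.5}$ in what follows.

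Next, set $r = (m/n)^{1/(1+\epsilon/2)}$, so that $m/r^{1+\epsilon/2} = n$. I would construct an $(n,m)$-network of depth $d+1$ by stacking two parts in exact analogy with Figure \ref{figure:sc_depth3_linear}: the upper $d$ layers realize a depth-$d$ $(n, m/r)$-superconcentrator from Lemma \ref{lem:depth_d_sc_size}, while the bottom layer is a reversed $(m, m/r, n)$-concentrator from Lemma \ref{lem:linear_concentrator}. The verification that the composition is an $(n,m)$-superconcentrator is word-for-word identical to the depth-$3$ case: any equal-size $X \subseteq [n]$ and $Y \subseteq [m]$ are joined first by a matching from $Y$ into $|Y|$ middle vertices through the bottom concentrator, and then by $|X|$ vertex-disjoint paths from those middle vertices to $X$ through the top superconcentrator.

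The size estimate is where the hypothesis $m \ge n\,\lambda_d^{1+\epsilon}(n)$ is used. By Lemma \ref{lem:linear_concentrator}, the bottom concentrator has size $O\bigl(m \cdot \log(r^{1+\epsilon/2})/\log(r^{\epsilon/2})\bigr) = O(m/\epsilon)$. By Lemma \ref{lem:depth_d_sc_size}, the top has size $O\bigl((m/r)\,\lambda_d(m/r)\bigr)$. Since $m \le n^{2.5}$, the same estimate used inside the proof of Lemma \ref{lem:depth_d_sc_size} gives $\lambda_d(m/r) = O(\lambda_d(n))$. The hypothesis then yields $r \ge \lambda_d(n)^{(1+\epsilon)/(1+\epsilon/2)}$, so $\lambda_d(n)/r \le \lambda_d(n)^{-(\epsilon/2)/(1+\epsilon/2)} \le 1$, making the top layer of size $O(m)$. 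Adding the two contributions gives a total of $O(m/\epsilon)$, as required.

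The one delicate point is verifying the inequality $\lambda_d(n)/r = O(1)$. The extra $\epsilon$ in the hypothesis exponent $1+\epsilon$ is precisely what survives, after dividing by the $1+\epsilon/2$ baked into the definition of $r$, as a non-negative slack $\lambda_d(n)^{(\epsilon/2)/(1+\epsilon/2)} \ge 1$ (using $\lambda_d(n) \ge 1$); without this $1+\epsilon$ exponent the top layer could carry a stray factor of $\lambda_d(n)$ and fail to be linear. Everything else in the proof is routine arithmetic and a reuse of earlier lemmas.
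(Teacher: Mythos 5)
Your proposal is correct and reproduces the paper's own construction: a depth-$d$ $(n, m/r)$-superconcentrator from Lemma~\ref{lem:depth_d_sc_size} stacked above a depth-$1$ reversed $(m, m/r, n)$-concentrator from Lemma~\ref{lem:linear_concentrator}, with $r$ tuned so the concentrator costs $O(m/\epsilon)$ while the hypothesis $m \ge n\,\lambda_d^{1+\epsilon}(n)$ absorbs the $\lambda_d$ factor in the top layer. The only cosmetic divergences are your choice $r = (m/n)^{1/(1+\epsilon/2)}$ versus the paper's $r = (m/n)^{1/(1+\epsilon)}$, and the preliminary case split $m \ge n^{2.5}$ together with the $\lambda_d(m/r) = O(\lambda_d(n))$ detour, both of which are avoidable because Lemma~\ref{lem:depth_d_sc_size} already bounds $\scsize_d(m/r, n)$ directly by $O((m/r)\lambda_d(n))$.
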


\begin{proof}
We construct a depth-$(d+1)$ $(n,m)$-superconcentrator as follows.  

\medskip
\noindent\textbf{Construction:}  
\begin{itemize}
    \item Let $r = \left( \frac{m}{n} \right)^{\frac{1}{1+\epsilon}}$, so that $\frac{m}{r^{1+\epsilon}} = n$.  
    \item Place $\frac{m}{r}$ vertices on the second-to-last layer.  
    \item The first $d$ layers form an $(n, \frac{m}{r})$-superconcentrator. 
    \item The last layer is a $(m, \frac{m}{r}, n)$-concentrator connecting the second-to-last layer to the outputs.  
\end{itemize}

\medskip
\noindent\textbf{Verification:}  
Consider any subset of inputs $X \subseteq [n]$ and outputs $Y \subseteq [m]$ of equal size $|X| = |Y|$.  
By the definition of the $(m, \frac{m}{r}, n)$-concentrator, each output in $Y$ can be connected to a distinct vertex in the second-to-last layer via vertex-disjoint edges; denote these vertices by $Z$.  
By the definition of the $(n, \frac{m}{r})$-superconcentrator, each input in $X$ can be connected to a distinct vertex in $Z$ via vertex-disjoint paths in the first $d$ layers.  

Combining these two sets of vertex-disjoint paths, we obtain $|X| = |Y|$ vertex-disjoint paths connecting $X$ to $Y$.  
Hence, the constructed network is indeed an $(n,m)$-superconcentrator.

\medskip
\noindent\textbf{Size estimate:}  
By Lemma \ref{lem:depth_d_sc_size}, the first $d$ layers (depth-$d$ $(n, \frac{m}{r})$-superconcentrator) have size
\[
O\Big(\frac{m}{r} \cdot \lambda_d(n)\Big) = O\Big( m \cdot \frac{\lambda_d(n)}{(\frac{m}{n})^{1/(1+\epsilon)}} \Big).
\]
Using the assumption $m \ge n (\lambda_d(n))^{1+\epsilon}$, we have
\[
\frac{\lambda_d(n)}{(\frac{m}{n})^{1/(1+\epsilon)}} \le 1,
\]
so the first $d$ layers contribute $O(m)$ to the total size.

By Lemma \ref{lem:linear_concentrator}, the last layer ($(m, \frac{m}{r}, n)$-concentrator) has size
\[
O\left( m \cdot \frac{\log(r^{1+\epsilon})}{\log(r^\epsilon)} \right) = O\left( \frac{m}{\epsilon} \right).
\]

\medskip
\noindent\textbf{Total size:}  
Adding the contributions from all layers gives
$
S_{d+1}(m,n) = O\left( m + \frac{m}{\epsilon} \right) = O\left( \frac{m}{\epsilon} \right),
$
absorbing the $O(m)$ term into $O(m/\epsilon)$ since $\epsilon < 1$.  
\end{proof}

\begin{theorem}[\ref{thm:main_sc}]
For any $m \ge n$, if the depth satisfies 
$
d \ge \alpha(m,n) + 3,
$ 
then the minimal size of a depth-$d$ $(m,n)$-superconcentrator is linear in $m$, that is,
$
\scsize_d(m,n) = O(m),
$ 
where $\alpha(m,n)$ is the two-parameter inverse Ackermann function (Definition \ref{def:inv_ack}).
\end{theorem}

\begin{proof}
\noindent\textbf{Case 1: $m \ge 128\, n$.}  

By the definition of $\alpha(m,n)$, we have
$
\frac{m}{n} \ge \lambda_{\alpha(m,n)}(n),
$ 
and by Proposition \ref{prop:alpha_well_defined}, this implies
$
\frac{m}{n} \ge \lambda_{\alpha(m,n)+2}^{2}(n).
$
Then, applying Theorem \ref{thm:general_d_linear_sc} with depth $d = \alpha(m,n)+3$, we conclude
$
\scsize_{\alpha(m,n)+3}(m,n) = O(m).
$

\medskip
\noindent\textbf{Case 2: $n \le m < 128\, n$.}  If $\alpha(m, n) = 1$, we have $m \ge n \lfloor\sqrt{n} \rfloor$. By Lemma \ref{lem:d3_linear_sc}, we know $\scsize_4(m, n) \le \scsize_3(m, n) = O(m)$. If $\alpha(m, n) = 2$, we have $m \ge n \log n \ge n (2\log^* n)^2$. By Theorem \ref{thm:general_d_linear_sc}, $\scsize_5(m, n) = O(m)$.

From now on, we assume $\alpha(m, n) \ge 3$. By the definition of $\alpha(m, n)$, we have $\lambda_{\alpha(m, n)}(n) \le 4$. By Lemma \ref{lem:depth_d_sc_size}, we have
\[
\scsize_{\alpha(m,n)}(m, n) = O(m \lambda_{\alpha(m, n)}(n)) = O(m).
\]
\end{proof}

\section{Conclusion}

In this paper, we study the arithmetic circuit complexity of threshold secret sharing. We prove a graph-theoretic characterization of unrestricted arithmetic circuits that compute the shares of a threshold secret sharing scheme, when the underlying field is sufficiently large. As a result, we derive both lower and upper bounds.

\section*{Acknowledgements}
We thank the anonymous reviewers for their valuable comments and feedback.

\appendix

\section{Properties of inverse Ackermann function}

\label{app:inverse_ack_properties}

We include here the proofs of Proposition \ref{prop:u34d_ub} and \ref{prop:alpha_well_defined}.

\begin{proof} (of Proposition \ref{prop:u34d_ub})
	1. Recall that $\lambda_1(n) = \lfloor \sqrt{n} \rfloor$. We have
\begin{eqnarray*}
\lambda_3(n) & \le & \min\{d : n^{2^{-d}} \le 3 \} + 1 \\
& = & \lceil \log\left(\log n - \log 3\right) \rceil + 1 \\
& \le &  \log\left(\log n - \log 3\right) + 2 \\
& \le & \log\log n + 2.
\end{eqnarray*}

2. Recall that $\lambda_2(n) = \lceil \log n \rceil < \log n + 1$. Let $g(n) = \log n + 1$.
Observe that $g(g(n)) \le \log n$ for all $n \ge 8$. So we have
\begin{eqnarray*}
	\lambda_4(n) & \le & \min\{ d : g^{(d)}(n) \le 8 \} + 3 \\
	& \le & \min\{ d : \log^{(\lfloor d/2 \rfloor)}(n) \le 8 \} + 3.
\end{eqnarray*}
If $\log^{(\lfloor d/2 \rfloor)}(n) \le 8$, then $\log^{(\lfloor d/2 \rfloor + 2)}(n) \le \log 3 	\approx 1.58$. So, $\lfloor \frac{d}{2} \rfloor + 2 \le \log^* n$, which implies that $d \le 2(\log^*n - 2) + 1$.
 Thus $\lambda_4(n) \le 2 ((\log^*n) - 2) + 1 + 3 = 2 \log^*n$,
which holds when $n \ge 8$. When $n \in \{3, 4, \ldots, 7\}$, $\lambda_4(n) \le 2 \log^*n$ can be verified by direct computation.

3. We do induction on $d$.

When $d = 1$, $\lambda_d(n) = \lambda_1(n) = \lfloor \sqrt{n} \rfloor \le n - 2$ for all $n \ge 4$.

When $d = 2$, $\lambda_d(n) = \lambda_2(n) = \lceil \log n \rceil \le n - 2$ for all $n \ge 4$.

When $d \ge 3$, $\lambda_d(n) = \lambda^*_{d-2}(n) \le \lambda_{d-2}(n) \le n - 2$, where $\lambda^*_{d-2}(n) \le \lambda_{d-2}(n)$ is by induction hypothesis.
\end{proof}

\begin{proof} (of Proposition \ref{prop:alpha_well_defined})
1. When $d = 1$, $\lambda_1(1) = 1$; When $d = 2$, $\lambda_2(2) = 1$; When $d = 3$, $\lambda_3(3) = \lambda_1^*(3) = 1$.

We do induction on $d$, where $d \ge 2$. Assuming the conclusion is true for $d$, we prove it for $d + 2$.
\begin{eqnarray*}
\lambda_{d+2}(d+2) & = & \lambda_d^*(d+2) \\
& \le & \lambda_d^*(\lambda_d(d+2)) + 1 \\
& \le & \lambda_d^*(d) + 1,
\end{eqnarray*}
where $\lambda_d(d+2) \le d$ is by Proposition \ref{prop:u34d_ub}. So, $\lambda_{d+2}(d+2) \le  \lambda_d^*(d) + 1 \le \lambda_d^*(\lambda_d(d)) + 2 \le  \lambda^*_d(4) + 2 \le \max\{\lambda_2^*(4), \lambda_3^*(4)\} + 2 = 4$.

2. By the definition of $\lambda_{d+2}(n)$, we have $\lambda_{d+2}(n) = \lambda_d^*(n) \le \lambda_d^*(\lambda_d(n)) + 1 \le \lambda_d^*(C) + 1$.

When $d$ is odd, $\lambda_d^*(C) \le \lambda_1^*(C) \le \log\log C + 2$ by Proposition \ref{prop:u34d_ub}. When $C \ge 128$, we have $(\log\log C + 2)^2 \le C$.

When $d$ is even, $\lambda_d^*(C) \le \lambda_2^*(C) \le 2\log^* C$ by Proposition \ref{prop:u34d_ub}. When $C \ge 128$, we have $(2\log^* C + 1)^2 \le C$.
\end{proof}

\printbibliography

\end{document}